\newcommand{\seq}[1]{\mathbf{#1}}
\newcommand{\set}[1]{\mathcal{#1}}
\newcommand{\C}{\mathcal{C}}
\newcommand{\I}{{\mathcal{I}}}
\begin{document}
\title{New CRT sequence sets for a collision channel without feedback}

\author{Yijin Zhang \and Yuan-Hsun Lo \and \\ Kenneth W. Shum \and Wing Shing Wong}


\institute{Y. Zhang \at
    School of Electronic and Optical Engineering, Nanjing University of Science and Technology, Nanjing, China \\
    National Mobile Communications Research Laboratory, Southeast University, Nanjing, China \\
    \email{yijin.zhang@gmail.com}
\and
    Y.-H. Lo \at
    School of Mathematical Sciences, Xiamen University, Xiamen, China\\
    \email{yhlo0830@gmail.com}
\and
    K. W. Shum \at
    Institute of Network Coding, The Chinese University of Hong Kong, Hong Kong\\
    \email{wkshum@inc.cuhk.edu.hk}
\and
    W. S. Wong \at
    Department of Information Engineering, The Chinese University of Hong Kong, Hong Kong\\
    \email{wswong@ie.cuhk.edu.hk}
\and The material in this paper was presented in part at Globecom 2014 Workshop on ULTRA, Austin, USA, Dec. 2014.
}

\date{Received: date / Accepted: date} 

\renewcommand{\labelenumi}{(\roman{enumi})}

\maketitle

\begin{abstract}
Protocol sequences are binary and periodic sequences used for deterministic multiple access in a collision channel without feedback.
In this paper, we focus on user-irrepressible (UI) protocol sequences that can guarantee a positive individual throughput per sequence period with probability one for a slot-synchronous channel, regardless of the delay offsets among the users.
As the sequence period has a fundamental impact on the worst-case channel access delay, a common objective of designing UI sequences is to make the sequence period as short as possible.
Consider a communication channel that is shared by $M$ active users, and assume that each protocol sequence has a constant Hamming weight $w$.
To attain a better delay performance than previously known UI sequences, this paper presents a CRTm construction of UI sequences with $w=M+1$, which is a variation of the previously known CRT construction.
For all non-prime $M\geq 8$, our construction produces the shortest known sequence period and the shortest known worst-case delay of UI sequences.
Numerical results show that the new construction enjoys a better average delay performance than the optimal random access scheme and other constructions with the same sequence period, in a variety of traffic conditions.
In addition, we derive an asymptotic lower bound on the minimum sequence period for $w=M+1$ if the sequence structure satisfies some technical conditions, called equi-difference, and prove the tightness of this lower bound by using the CRTm construction.

\keywords{collision channel without feedback \and protocol sequences \and user-irrepressible sequences \and CRT sequences \and conflict-avoiding codes}


\end{abstract}


\section{Introduction}\label{sec:intro}

\subsection{Background}\label{sec:intro_background}
Protocol sequences are periodic deterministic binary sequences that are used to provide reliable medium access control (MAC) protocol for a collision channel without feedback~\cite{Massey85}.
Compared with time division multiple access (TDMA), ALOHA and carrier sense multiple access (CSMA), a protocol sequence-based scheme does not require stringent time synchronization, channel monitoring, backoff algorithm or packet retransmission.
Such simplicity is particularly desirable in wireless sensor networks (WSNs) and vehicular ad hoc networks (VANETs), where well-coordinated transmission and time synchronization may be difficult to achieve due to user mobility, time-varying propagation delays or energy constraints.
The natural interest of protocol sequence based schemes, the guaranteed performance metrics, such as worst-case delay or minimum throughput, have been commonly considered in previous studies on protocol sequences \cite{Wong07,CWS08,SCSW,VANET13,VANET14,Wong14,MS16,XCW16,CYCLWK16,ZLWS16,SCCW16}. 
In addition, some further performance metrics, such as average group/individual delay or average throughput, have been investigated in \cite{SCSW,ZLWS16,SCCW16}, and related approaches for sequence allocation can be found in \cite{VANET14,Wong14,MS16,XCW16}. 

In this paper, our focus is on {\em user-irrepressible} (UI) protocol sequences that can guarantee a positive individual throughput per sequence period with probability one for a slot-synchronous channel. 
UI property is a fundamental requirement in a protocol sequence-based scheme for delay-constrained services with small amounts of user data.
The design goal of UI sequences is to minimize the sequence period, which indicates how long the receiver has to wait between two successfully transmitted packets in the worst-case, when the number of active users are given.
Some constructions of UI sequences can be found in~\cite{Massey85,Wong07,CWS08,SCSW,CYCLWK16,Nguyen92,GyorfiVajda93,BE95,YK95,YK02,SWSC,CRT,UIS09}.

We remark that deterministic MAC protocols can also be referred to in the literature as conflict-avoiding codes (CACs)\cite{LT05,CAC32010,SWC10,LFL15,LMJ16}, optical orthogonal codes (OOCs)\cite{CSW89,WJ10,FWWZ16}, or topological transparent scheduling \cite{CF94,CCS06,CCS06-2,LLLZ14} with different design goals.
In particular, UI sequences aim to minimize the sequence period by assuming all users are active, whereas CACs aim to maximize the number of potential users when the sequence period and the number of maximum active users are both given.
In addition to the aforementioned deterministic access schemes, other combinatorial designs for applications in communications, cryptography, and networking can be found in \cite{CDSLP99}.

\subsection{System Model}


We consider a feedback-free multiple-access channel shared by $M$ users, all of them may be active simultaneously, transmitting to a single receiver. 
This model is applicable to bursty traffic.
Channel time is assumed to be divided into time slots of equal duration. 
Each active user reads out the sequence entry from the assigned binary $(0,1)$ sequence sequentially and transmits a packet within a time slot if and only if the sequence value is equal to 1.

To model a time slotted system, time indices are in units of one time slot duration. 
Due to propagation delay, user mobility or random traffic, there are relative time offsets $\tau_i$ of user $i$ for $i=1,2,\ldots,M$, such that a packet from user $i$, received at the time instant $t$ on the receiver's clock, was actually sent at the time instant $t-\tau_i$ on user $i$'s clock. 
These relative time offsets are random, always unknown to the users, but unchanged in a communication session. 
As introduced in \cite{Massey85,CCS06}, there are two different levels of channel synchronization:
\begin{enumerate}[(i)]
\item The channel is \emph{slot-synchronous} if $\tau_i$ is an arbitrary integer for all $i$, i.e., all users know the slot boundaries and transmit packets aligned to the slot boundaries.
\item The channel is \emph{completely asynchronous} if $\tau_i$ is an arbitrary real number for all $i$.
\end{enumerate}
For practical considerations, the slot-synchronous assumption is valid if the synchronization is provided by simple narrow band beacon signals. 
In these scenarios, the receiver can only approximate $\tau_i$ by $\tau_i$ (modulo $1/f$) for all $i$, where $f$ is the used frequency of the beacon signal. 
As such, for simplicity, we restrict our attention to the slot-synchronized model.

If exactly one user transmits a packet within a slot, the packet can be received correctly.
A collision occurs if two or more than two users transmit simultaneously, and all time-overlapping packets are assumed unrecoverable.

For $i=1,2,\ldots,M$, let $\seq{s}_i:= [ s_i(0)\ s_i(1)\ \ldots\ s_i(L-1)]$ be a binary protocol sequence with sequence period (or length) $L$ assigned to user $i$.
Let $\mathbb{Z}_L = \{0,1,2,\ldots,L-1\}$ denote the ring of residues modulo $L$.
Given $\seq{s}_i$, we define the {\em characteristic set} of $\seq{s}_i$ by $\mathcal{I}_i := \{t \in \mathbb{Z}_L : s_i(t) = 1\}$.
We also call $\mathcal{I}_i$ as a ``sequence'', although it is actually represented as a subset of $\mathbb{Z}_L$.
The cardinality of $\I_i$, $|\I_i|$, is called the {\em Hamming weight} of $\I_i$ or $\seq{s}_i$.
Let $$\I_i+\tau_i:=\{k+\tau_i:\,k\in\I_i\},$$ where the addition is performed in $\mathbb{Z}_L$, be the shifted version of $\I_i$ by a \emph{relative shift} $\tau_i$.
If the relative time offset of user $i$ is $\tau_i$, he or she transmits a packet at time slot $t$ if and only if $t\in\I_i+\tau_i$ in modulo $L$.
This paper assumes all sequences have the same Hamming weight $w$ (called \emph{constant-weight}) and share the same period $L$.

For the sake of convenience, for any positive integer $M$, let $[M]:=\{1,2,\ldots,M\}$ and $[M]_i:=[M]\setminus\{i\}$.
Obviously, $[M]=[M]_i$ whenever $i\notin [M]$.
Let $\C=\{\I_1,\I_2,\ldots,\I_M\}$ be a collection of $M$ subsets in $\mathbb{Z}_L$.
$\I_i$ is said to be \emph{unblocked} in $\C$ if for any integer pattern $(\tau_j\in\mathbb{Z}_L:\, j\in[M]_i)$, one has
\begin{equation}
\I_i \nsubseteq \bigcup_{j\in[M]_i} (\I_j + \tau_j).
\label{eq:UI}
\end{equation}
We say $\mathcal{C}$ is \emph{user-irrepressible} (UI) if $\I_i$ is unblocked for all $i\in[M]$.

\begin{example}\label{ex:UI_35}
One can check that the following $\C=\{\I_1,\I_2,\I_3,\I_4\}$ is a UI sequence set of period $L=35$ by~\eqref{eq:UI}.
\begin{align*}
  \I_1&= \{0,10,15,25,30\};  &\I_2= \{0,4,13,17,26\}; \\
  \I_3&= \{0,8,16,24,32\};   &\I_4= \{0,6,12,18,24\}.
\end{align*}
\end{example}

\smallskip

Given two characteristic sets $\I_1$, $\I_2$ and a relative shift $\tau\in\mathbb{Z}_L$ between them.
Let $H_{\I_1\I_2}(\tau)$ denote the \emph{Hamming cross-correlation} between $\I_1$ and $\I_2$ with respect to $\tau$ by giving
\[
H_{\I_1\I_2}(\tau):= |\I_1 \cap (\I_2+\tau)|.
\]
The \emph{maximum Hamming cross-correlation} between $\I_1$ and $\I_2$ is defined as
\[
H_{\I_1\I_2}:=\max_{\tau\in\mathbb{Z}_L} H_{\I_1\I_2}(\tau).
\]
By symmetry, one can see that $H_{\I_1,\I_2}=H_{\I_2\I_1}$.
Let $\lambda_{c}$ be the maximum Hamming cross-correlation for any pair of distinct characteristic sets in $\mathcal{C}$.
We note that $\lambda_c$ measures the maximal mutual interference between any pair of the users in a slot-synchronous channel.
In Example~\ref{ex:UI_35}, one can check that $H_{\mathcal{I}_1\mathcal{I}_{2}}=H_{\mathcal{I}_1\mathcal{I}_{3}}=H_{\mathcal{I}_1\mathcal{I}_{4}}=H_{\mathcal{I}_2\mathcal{I}_{3}}=1$, and $H_{\mathcal{I}_2\mathcal{I}_{4}}=H_{\mathcal{I}_3\mathcal{I}_{4}}=2$.
Hence we have $\lambda_c=2$.

We remark here that for practical considerations, one would like to remove the slot-synchronous assumption. 
It is, in fact, possible to do so and to allow the users to be completely asynchronous.
In a completely asynchronous channel \cite{Hui84}, compared with the slot-synchronous one, the relative time offsets are arbitrary real numbers in units of one time slot duration.
In such a channel, a collision can occur due to partial overlapping of packets.
In other words, a packet is assumed to be successfully sent out if and only if it is not completely or partially overlapped by any other packet.
A sequence set $\C$ with a common period is said to be \emph{completely irrepressible} (CI) \cite{ZSW11} if each user can successfully send out at least one packet per period, no matter what the real offsets are.
By definition, a CI sequence set must be UI.
Conversely, a UI sequence set can be easily modified to provide the CI property at the cost of doubling the sequence period, by padding an extra zero after each sequence entry \cite{Massey85,CCS06,ZSW11}. 
This approach ensures that the maximal number of conflicts occurring in any group of users remains unchanged when the system synchronism is reduced from the slot-synchronous to completely asynchronous. 
As such, we focus on the construction of UI sequences in this paper, and the generalization to the CI sequences is not considered.

\subsection{Related Works and Motivation}

There are various known UI sequences in literature: \emph{Shift-invariant Sequences} (SI)\cite{Massey85,CWS08,SCSW}, \emph{Wobbling Sequences} (WS)\cite{Wong07}, \emph{Extended Prime Sequences} (EPS)\cite{YK95}, \emph{the Chinese Reminder Theorem Sequences} (CRT)\cite{SWSC,CRT} and \emph{CRT Sequences in Prime Users} (CRTp)\cite{UIS09}.
Table~\ref{table:UI_comparison} lists the major parameters of these UI sequences.

\begin{table}[h]
\caption{A comparison of UI sequences designs. $u_M$ is the smallest prime that is larger than or equal to $M$, while $p_M$ is the smallest prime that is strictly larger than $M$.}
\label{table:UI_comparison}
\scriptsize
\begin{tabular}{c||l|l|l|l|l}
\hline
Construction & $\begin{array}{l} \text{Applicable} \\ \text{User Number} \end{array}$ & $\begin{array}{l} \text{Constant-} \\ \text{Weight?} \end{array}$ & $\begin{array}{l} \text{Hamming} \\ \text{Weight} \end{array}$ & $\begin{array}{l} \text{Sequence} \\ \text{Period} \end{array}$ & $\lambda_c$ \\[2pt] \hline 
SI \cite{Massey85,CWS08,SCSW} & positive integer $M$ & Yes & $2^{M-1}$ & $2^M$ & $2^{M-2}$ \\[2pt] 
WS \cite{Wong07} & odd prime $p$ & Yes & $p^3$ & $p^4$ & $p^2$ \\[2pt]
EPS \cite{YK95} & odd prime $p$ & Yes & $p$ & $p(2p-1)$ & $1$ \\[2pt] 
CRT \cite{SWSC,CRT} & positive integer $M$ & Yes & $M$ & $u_M(2M-1)$ & $1$ \\[2pt] 
CRTp \cite{UIS09} & odd prime $p$ & No & $\begin{array}{l} \text{one for }p, \text{ and} \\ \text{others for } p+1 \end{array}$ & $2p(p-1)$ & $2$ \\[10pt] 
CRTm & positive integer $M$ & Yes & $M+1$ & $p_M(2M-1)$ & $2$ \\ \hline
\end{tabular}
\end{table}

The primary design objective of UI sequences is to minimize the sequence period $L$ when $M$ is given, as $L$ has a fundamental impact on the worst-case channel access delay, i.e., the maximum waiting time that a message can be successfully received.
Let $L_{min}(M)$ be the smallest $L$ such that a set of $M$ UI sequences of common period $L$ exists.
The work in \cite{UIS09} shows that $L_{min}(M)$ is lower bounded by $8M^2/9$.
One can see from Table~\ref{table:UI_comparison} that SI sequences are the shortest known UI sequences for $M\leq 6$, CRTp sequences are the shortest for all prime $M\geq 7$, and CRT sequences are the shortest for all non-prime $M\geq 8$.
Moreover, the work in \cite{SWSC} improves the lower bound on the sequence period from $8M^2/9$ to $2M^2$ for the case with constant Hamming weight $w=M$, and shows that the CRT sequences achieve this lower bound asymptotically.

We now know that CRT sequences are the shortest known UI sequences for all non-prime $M\geq 8$, however, their small Hamming weight would possibly yield larger average delay~\cite{VANET13,VANET14}, which is also an important considered metric in the evaluation of channel access schemes.
This observation motivates us to investigate short UI sequences with $w>M$.
In this paper, we propose a CRTm construction of UI sequences, which is also based on the \emph{Chinese Remainder Theorem}~\cite{Ireland90}.

\subsection{Contribution}\label{sec:intro_contri}

Our proposed CRTm sequences are with period $p_M(2M-1)$, constant Hamming weight $M+1$ and $\lambda_c=2$, where $p_M$ is the smallest prime that is larger than $M$, for any positive integer $M$.
Obviously, $p_M=u_M$ when $M$ is a non-prime.
Both CRT sequences and CRTm sequences are the shortest known UI sequences for all non-prime $M\geq 8$, but CRTm sequences have a larger Hamming weight.
It will be shown that this larger Hamming weight would bring better average individual-delay and group-delay.
In addition, similar to the improvement of minimum sequence period for $w=M$ in \cite{SWSC}, we derive an asymptotic lower bound of $2M^2$ for $w=M+1$ if the sequence structure satisfies some technical conditions, called {\em equi-difference}, and hence prove the CRTm construction is optimal in the sense that it can achieve this lower bound.
Our method can be viewed as a generalization of that in~\cite{SWSC} for $w=M$.

It is worth pointing out that our proposed CRTm construction allows $w\leq(M-1)\lambda_c$, whereas all other known constant-weight UI sequences except SI in Table~\ref{table:UI_comparison} strictly require $w>(M-1)\lambda_c$.
This latter condition clearly implies the UI property.
The difference on the relation between $w$ and $\lambda_c$ makes the proof for the UI property of the CRTm construction very different and more complicated.

The rest of this paper is organized as follows.
After setting up some definitions and notation in Section~\ref{sec:definition}, we provide the CRTm construction of UI sequences with constant weight $w=M+1$ in Section~\ref{sec:CRTm}.
Then, in Section~\ref{sec:optimal}, we establishe an asymptotic lower bound on the minimum sequence period with $w=M+1$, and prove that the CRTm construction can achieve this lower bound if the sequences are constant-weight and equi-difference.
Section~\ref{sec:simulation} is devoted to demonstrate the delay performance of CRTm sequences through numerical study.
A conclusion is given in Section~\ref{sec:conclusion}.


\section{Definitions and Notation}\label{sec:definition}
Given a sequence set $\C=\{\mathcal{I}_1,\mathcal{I}_2,\ldots, \mathcal{I}_M\}$.
For $i=1,2,\ldots,M$, let $\set{B}_i$ be a collection of all indices $k$ ($k\neq i$) such that the $H_{\I_i\I_k}=\max_{j\in[M]_i}H_{\I_i\I_j}$, that is,
\[
 \set{B}_i:=\left\{k\in[M]_i:\, H_{\mathcal{I}_i\mathcal{I}_{k}}\geq H_{\mathcal{I}_i\mathcal{I}_j}, \forall j\in[M]_i\right\}.
\]
Let $\set{T}_{i,k}$ be a collection of all relative shifts such that
\[
 \set{T}_{i,k}=\left\{\tau_k\in\mathbb{Z}_L:\, H_{\I_i\I_k}(\tau_k)= H_{\I_i\I_k} \right\}.
\]
For $\tau_k$ in $\set{T}_{i,k}$, we define
\[
\mathcal{I}_{i,k,\tau_k} :=\mathcal{I}_{i} \setminus (\mathcal{I}_{k}+\tau_k).
\]

In Example~\ref{ex:UI_35}, since $H_{\mathcal{I}_1\mathcal{I}_{2}}=H_{\mathcal{I}_1\mathcal{I}_{3}}=H_{\mathcal{I}_1\mathcal{I}_{4}}=H_{\mathcal{I}_2\mathcal{I}_{3}}=1$, and $H_{\mathcal{I}_2\mathcal{I}_{4}}=H_{\mathcal{I}_3\mathcal{I}_{4}}=2$, we have $\set{B}_1=\{2,3,4\}$, $\set{B}_2=\set{B}_3=\{4\}$ and $\set{B}_4=\{2,3\}$.
Meanwhile, $\set{T}_{3,4}=\{0,8\}$, $\mathcal{I}_{3,4,0}=\{8,16,32\}$, and $\mathcal{I}_{3,4,8}=\{0,16,24\}$.

Let $\I$ be a sequence of period $L$ and Hamming weight $w$.
We use $d^*(\set{I})$ to denote the {\em set of non-zero differences} between pairs of distinct elements in~$\set{I}$, i.e.,
\[
 d^*(\set{I}) := \{a-b ~(\text{mod } L):\, a,b\in \set{I},a\neq b\}.
\]
Obviously, $|d^*(\set{I})|\geq |\set{I}|-1$.
$\set{I}$ is said to be \emph{exceptional} if $|d^*(\set{I})|< 2|\set{I}|-2$.
$\set{I}$ is called {\em equi-difference} if the elements in $\set{I}$ form an arithmetic progression in $\mathbb{Z}_L$, i.e.,
\[
 \set{I} =\left\{0, g, 2g, \ldots, (w-1)g \right\} \quad \text{for some }g\in\mathbb{Z}_L,
\]
where the product is performed in $\mathbb{Z}_L$.
The element $g$ is called a {\em generator} of $\set{I}$.
If all $\mathcal{I}$s in $\mathcal{C}$ are equi-difference, then we say $\mathcal{C}$ is equi-difference.

In Example~\ref{ex:UI_35}, $\I_1$, $\I_2$, $\I_3$ and $\I_4$ are equi-difference with Hamming weight $w=5$ and generators $g_1=15$, $g_2=13$, $g_3=8$ and $g_4=6$, respectively.
One can further check that $d^*(\set{I}_1)=\{5,10,15,20,25,30\}$ and thus $\set{I}_1$ is exceptional.


\section{A New Construction of UI Sequences}\label{sec:CRTm}

We start this section with a necessary and sufficient condition for a sequence set to be UI.

\begin{lemma}
Let $\C=\{\I_1,\I_2,\ldots,\I_M\}$ be a sequence set.
$\C$ is UI if and only if, for $i\in[M]$, $A\subseteq[M]_i$ and arbitrary relative shifts $\tau_j$, $j\in A$, one has
\begin{equation}
\left| \I_i\setminus\bigcup_{j\in A}(\I_j+\tau_j) \right| \geq M-|A|.
\label{eq:UI_generalized}
\end{equation}
\label{prop:UI}
\end{lemma}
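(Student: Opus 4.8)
The plan is to prove the two implications separately, exploiting the fact that \eqref{eq:UI_generalized} is a whole family of inequalities indexed by the subset $A$, of which the defining UI condition \eqref{eq:UI} is merely the extreme case $A=[M]_i$.

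\emph{Sufficiency.} For the ``if'' direction I would simply specialize \eqref{eq:UI_generalized} to $A=[M]_i$. Since then $|A|=M-1$, the right-hand side becomes $M-|A|=1$, so the inequality reads
\[
\left|\I_i\setminus\bigcup_{j\in[M]_i}(\I_j+\tau_j)\right|\geq 1
\]
for every choice of shifts, which is precisely the assertion that $\I_i\nsubseteq\bigcup_{j\in[M]_i}(\I_j+\tau_j)$, i.e.\ that $\I_i$ is unblocked. As this holds for all $i\in[M]$, the set $\C$ is UI.

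\emph{Necessity.} For the ``only if'' direction I would argue by contraposition: assume \eqref{eq:UI_generalized} fails for some index $i$, some $A\subseteq[M]_i$, and some shifts $\tau_j$ ($j\in A$), and then exhibit a full shift pattern that blocks $\I_i$. Set $S:=\I_i\setminus\bigcup_{j\in A}(\I_j+\tau_j)$; the assumed failure means $|S|\leq M-|A|-1$. The still-unused users form $[M]_i\setminus A$, of cardinality $(M-1)-|A|=M-|A|-1\geq|S|$, so there are at least as many free users as uncovered points of $\I_i$. I would therefore fix an injection from $S$ into $[M]_i\setminus A$, and for each uncovered point $s\in S$ with image $k$, use that $\I_k$ is nonempty (each sequence has positive weight $w$) to pick some $x\in\I_k$ and set $\tau_k:=s-x$, so that $s\in\I_k+\tau_k$. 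Giving the remaining free users arbitrary shifts, every element of $\I_i$ is now covered---those already in $\bigcup_{j\in A}(\I_j+\tau_j)$ by the original shifts and those in $S$ by the newly chosen ones---whence $\I_i\subseteq\bigcup_{j\in[M]_i}(\I_j+\tau_j)$, contradicting \eqref{eq:UI}.

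The step to get right is the counting: the inequality $|S|\leq M-|A|-1$ must be matched against $|[M]_i\setminus A|=M-|A|-1$ so that the required injection from uncovered points to free users exists, and one must invoke nonemptiness of each $\I_k$ in order to slide it onto a prescribed residue. I expect no serious obstacle beyond this bookkeeping; the only subtlety worth flagging is the boundary case $A=\emptyset$, where \eqref{eq:UI_generalized} forces $|\I_i|\geq M$, so the lemma silently encodes the necessary weight condition $w\geq M$, which the CRTm weight $w=M+1$ satisfies with room to spare.
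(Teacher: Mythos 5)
Your proof is correct and follows essentially the same route as the paper's: sufficiency by specializing to $A=[M]_i$, and necessity by contradiction/contraposition, covering the at most $M-|A|-1$ uncovered points of $\I_i$ using the $M-|A|-1$ unused sequences, one point per sequence via a suitable shift. Your explicit injection and the nonemptiness remark merely formalize the paper's ``iteratively cover one element'' step, so there is no substantive difference.
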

\begin{proof}
We only consider the necessary part as the sufficient part is simply obtained by letting $A=[M]_i$.
Assume to the contradiction that there exist $i\in[M]$, $A\subseteq[M]_i$ and $\tau_j$, $j\in A$ such that the cardinality of $\I_i\setminus\bigcup_{j\in A}(\I_j+\tau_j)$ is at most $M-|A|-1$.
Then we always can choose some relative shifts $\tau_k$ for all $k\in[M]_i\setminus A$ such that 
\begin{equation}\label{eq:UI_equivalent}
\I_i\setminus\bigcup_{j\in A}(\I_j+\tau_j)\subseteq\bigcup_{k\in[M]_i\setminus A}(\I_k+\tau_k).
\end{equation}
More precisely, one can iteratively cover one element in the left hand side of \eqref{eq:UI_equivalent} by a sequence $\I_k$ or its shifted version by $\tau_k$ for some $k\in[M]_i\setminus A$.
This contradicts to \eqref{eq:UI}.
\qed
\end{proof}

Now, we present a new construction of constant-weight UI sequences, called the CRTm construction, which is a variation of the CRT construction~\cite{SWSC}.

Let $p$ and $q$ be relatively prime integers.
Let $\mathbb{Z}_p\otimes\mathbb{Z}_q$, consisting of all ordered pairs $(a,b)$ with $a\in\mathbb{Z}_p$ and $b\in\mathbb{Z}_q$, be the direct product of the rings $\mathbb{Z}_p$ and $\mathbb{Z}_q$.
There is a natural bijection (ring isomorphism) $f:\mathbb{Z}_{pq}\to\mathbb{Z}_p\otimes\mathbb{Z}_q$ by defining
\[
f(x) := (x \bmod p, x \bmod q),
\]
that is, $f$ preserves addition and multiplication on both sides.
If there is no danger of confusion, operations under $\mathbb{Z}_p\otimes\mathbb{Z}_q$ are componentwise taken modulo $p$ and $q$.
We will construct sequences by specifying characteristic sets in $\mathbb{Z}_p \otimes \mathbb{Z}_q$.

\medskip

\noindent{\bf CRTm Construction:}
Given $M \geq 4$, we set $p_M$ to be the smallest prime larger than $M$.
Let
\[
\widehat{\I}_j :=
\begin{cases}
   \Big\{ (jy,y) \in \mathbb{Z}_{p_M} \otimes \mathbb{Z}_{2M-1}:\, y = 0,1,\ldots, M\Big\}, &j=0,1,\ldots,p_M-1; \\
  \Big\{ (y,0) \in \mathbb{Z}_{p_M} \otimes \mathbb{Z}_{2M-1}:\, y = 0,1,\ldots, M\Big\}, \ &j=p_M.
\end{cases}
\]
Notice that $p_M$ is relatively prime to $2M-1$ due to $M\geq 4$ and the Bertrand's postulate, which states that there is always a prime strictly between $M$ and $2M-2$ for any integer $M\geq 4$.
Then we obtain the characteristic sets of the sequences, $\mathcal{I}_j$, by taking the inverse image $f^{-1}(\widehat{\I}_j)$ for $j=0,\ldots, {p_M}$.
The CRTm construction produces $p_M+1$ sequences of length ${p_M}(2M-1)$ and constant-weight $M+1$.

\begin{example}\label{ex:UI_77}
Given $M=6$, the CRTm construction produces 8 sequences of period 77 and constant-weight 7. The characteristic sets are:
\begin{align*}
  \set{I}_0= \{0,56,35,14,70,49,28\}; \ \ \ \ &\set{I}_1= \{0,1,2,3,4,5,6\}; \\
  \set{I}_2= \{0,23,46,69,15,38,61\}; \ \ \ \ &\set{I}_3= \{0,45,13,58,26,71,39\};\\
  \set{I}_4= \{0,67,57,47,37,27,17\}; \ \ \ \ &\set{I}_5= \{0,12,24,36,48,60,72\};\\
  \set{I}_6= \{0,34,68,25,59,16,50\}; \ \ \ \ &\set{I}_7= \{0,11,22,33,44,55,66\}.
\end{align*}
\end{example}

\begin{remark}
The previously known CRT construction can be obtained from the CRTm construction by removing the last ordered pair from each sequence, namely, by letting $y=0,1,\ldots,M-1$.
As the CRT construction allows $\lambda_c=1$ \cite{SWSC,CRT}, the UI property can be derived directly by $w>(M-1)\lambda_c$.
However, one can see from Example~\ref{ex:UI_77} that $\lambda_c\geq 2$ for CRTm construction. 
Even though the two constructions look similar, the proof of the UI property is very different as the CRTm construction allows $w\leq (M-1)\lambda_c$.
\end{remark}


Following Example~\ref{ex:UI_77}, we consider the set $\C=\{\I_0,\I_1,\I_3,\I_4,\I_5,\I_7\}$, and try to show that $\I_1$ is unblocked in $\C$.
Observe that $H_{\I_1,\I_3}=2$, $H_{\I_1,\I_5}=2$, and $H_{\I_1,\I_0}=H_{\I_1,\I_4}=H_{\I_1,\I_7}=1$.
It is easy to see that $\set{T}_{1,3}=\{6\}$, that is, only the relative shift $\tau=6$ satisfies $H_{\I_1,\I_3}(\tau)=2$.
Let $\I':=\I_1\setminus(\I_3+6)=\{1,2,3,4,5\}$.
One has $H_{\I',\I_5}=1$, which implies that at most $3$ packets in $\I_1$ can be blocked by $\I_3$ and $\I_5$ simultaneously. 
Since at most one packet in $\I_1$ can be blocked by each of the other three sequences, $\I_1$ is unblocked in $\C$.

The above example illustrates the idea of how to prove a set of sequences having $w\leq(M-1)\lambda_c$ to be UI.
As for the CRTm construction, we aim to show that any $M$ sequences obtained by the construction form a UI sequence set.
We start with the following lemma that gives an equivalent condition for the existence of UI sequences with constant-weight $w=M+1$.

\begin{lemma}
A sequence set $\mathcal{C}=\{\mathcal{I}_1, \mathcal{I}_2, \ldots, \mathcal{I}_M\}$ of constant-weight $M+1$ is UI if and only if, for $i\in[M]$, one has
\begin{enumerate}
\item $H_{\mathcal{I}_{i}\mathcal{I}_{j}}\leq 2$ for any $j\in[M]_i$; and
\item $H_{\mathcal{I}_{i,k,\tau_k}\mathcal{I}_{j}} \leq  1$, i.e., $d^*(\mathcal{I}_{i,k,\tau_k}) \cap d^*(\mathcal{I}_j)=\emptyset$, for any distinct integers $j, k\in[M]_i$ such that $k\in\set{B}_i$ and any $\tau_k \in \set{T}_{i,k}$.
\end{enumerate}
\label{lem:eqc}
\end{lemma}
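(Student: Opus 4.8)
The plan is to invoke Lemma~\ref{prop:UI}, which reduces the UI property to the family of cardinality inequalities~\eqref{eq:UI_generalized}, and then to show that these inequalities are equivalent to conditions (1) and (2). Two elementary observations will be used throughout. First, since $\I_{i,k,\tau_k}\subseteq\I_i$, for every $j$ one has $H_{\I_{i,k,\tau_k}\I_j}\le H_{\I_i\I_j}$. Second, the two formulations of (2) coincide: $H_{\I_{i,k,\tau_k}\I_j}\le 1$ fails exactly when some shift places two points of $\I_{i,k,\tau_k}$ inside a translate of $\I_j$, which is precisely the existence of a common nonzero difference in $d^*(\I_{i,k,\tau_k})\cap d^*(\I_j)$. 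I will therefore work with the cross-correlation form of (2).

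For \emph{necessity}, assume $\C$ is UI. Setting $A=\{j\}$ in~\eqref{eq:UI_generalized} gives $(M+1)-H_{\I_i\I_j}(\tau_j)\ge M-1$ for all $\tau_j$, and maximizing over $\tau_j$ yields (1). For (2), fix $k\in\set{B}_i$, $\tau_k\in\set{T}_{i,k}$ and $j\in[M]_i\setminus\{k\}$, and split on the common maximum $H_{\I_i\I_k}$. If $H_{\I_i\I_k}=2$, then $|\I_{i,k,\tau_k}|=M-1$ and, using the identity $\I_i\setminus((\I_j+\tau_j)\cup(\I_k+\tau_k))=\I_{i,k,\tau_k}\setminus(\I_j+\tau_j)$, the instance $A=\{j,k\}$ of~\eqref{eq:UI_generalized} forces $(M-1)-H_{\I_{i,k,\tau_k}\I_j}(\tau_j)\ge M-2$, i.e. $H_{\I_{i,k,\tau_k}\I_j}\le 1$. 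If instead $H_{\I_i\I_k}\le 1$, then (since $k\in\set{B}_i$) this is the overall maximum, so (2) is immediate from the monotonicity $H_{\I_{i,k,\tau_k}\I_j}\le H_{\I_i\I_j}\le H_{\I_i\I_k}\le 1$.

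The \emph{sufficiency} direction is the substantive part, and I expect the case split below to be the main obstacle, since the naive union bound $|\I_i\cap\bigcup_{j\in A}(\I_j+\tau_j)|\le 2|A|$ is already too weak once $|A|\ge 2$. Assume (1) and (2), and fix $i$, a subset $A$ with $a=|A|$, and shifts $\tau_j$. If every $j\in A$ has $H_{\I_i\I_j}(\tau_j)\le 1$, then at most $a$ points of $\I_i$ are covered, leaving at least $(M+1)-a\ge M-a$ uncovered. Otherwise some $k\in A$ has $H_{\I_i\I_k}(\tau_k)=2$; by (1) this value is maximal, so $k\in\set{B}_i$ and $\tau_k\in\set{T}_{i,k}$. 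Peeling this sequence off gives $\I_i\setminus\bigcup_{j\in A}(\I_j+\tau_j)=\I_{i,k,\tau_k}\setminus\bigcup_{j\in A\setminus\{k\}}(\I_j+\tau_j)$ with $|\I_{i,k,\tau_k}|=M-1$, and condition (2) bounds each remaining overlap $H_{\I_{i,k,\tau_k}\I_j}(\tau_j)$ by $1$. Hence at most $a-1$ of these $M-1$ points are covered, leaving at least $(M-1)-(a-1)=M-a$ uncovered, which is exactly~\eqref{eq:UI_generalized}.

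The crux, in both directions, is the realization that at most one index need be allowed to attain the maximal correlation $2$: it must be extracted first, after which condition (2) guarantees that every other sequence can block at most one of the surviving $M-1$ elements. This is what permits the delicate case $w\le(M-1)\lambda_c$ to be handled despite the failure of the simple counting argument that would suffice when $w>(M-1)\lambda_c$.
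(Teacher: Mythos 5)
Your proposal is correct and takes essentially the same approach as the paper: necessity via the instances $A=\{j\}$ and $A=\{j,k\}$ of Lemma~\ref{prop:UI} (the paper argues by contradiction, you argue contrapositively, including the same identity $\mathcal{I}_i\setminus\big((\mathcal{I}_k+\tau_k)\cup(\mathcal{I}_j+\tau_j)\big)=\mathcal{I}_{i,k,\tau_k}\setminus(\mathcal{I}_j+\tau_j)$), and sufficiency by peeling off the single shift that can attain cross-correlation $2$ and bounding all remaining overlaps by $1$. The only difference is that you spell out in full the sufficiency count (at most $2+(|A|-1)\cdot 1$ blocked elements, leaving at least $M-|A|$) that the paper compresses into ``it is easy to see,'' which is an elaboration of the same idea rather than a different method.
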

\begin{proof}
First, we prove the necessary part by contradiction.

(i) Suppose $H_{\mathcal{I}_{i}\mathcal{I}_{j}}\geq 3$ for some $j\in[M]_i$.
Then $|\I_i\setminus(\I_j+\tau_j)|\leq M-2<M-1$ for some $\tau_j\in\set{T}_{i,j}$, which contradicts to \eqref{eq:UI_generalized} by setting $A=\{j\}$.

(ii) Suppose $H_{\I_{i,k,\tau_k}\I_{j}} \geq  2$ for some distinct integers $j, k\in[M]_i$ such that $k\in \set{B}_i$ and some $\tau_k\in\set{T}_{i,k}$.
By the defining property of $\I_{i,k,\tau_k}$, we have $\I_{i,k,\tau_k}\subset\I_i$.
It is easy to see $H_{\I_i\I_j} \geq H_{\I_{i,k,\tau_k}\I_j} \geq 2$.
Since $k\in\set{B}_i$, it further implies that $H_{\I_i\I_k}\geq H_{\I_i\I_j}\geq 2$.
Let $\tau_j$ be the relative shift so that $H_{\I_{i,k,\tau_k}\I_{j}}(\tau_j)=H_{\I_{i,k,\tau_k}\I_{j}}$.
Then,
\begin{align*}
\left|\I_i\setminus\Big((\I_k+\tau_k)\cup(\I_j+\tau_j)\Big) \right| &= \left|\Big(\I_i\setminus(\I_k+\tau_k)\Big)\setminus(\I_j+\tau_j) \right| \\
&\leq \left|\Big(\I_i\setminus(\I_k+\tau_k)\Big) \right| -2 \\
&\leq (M+1)-2-2 < M-2,
\end{align*}
which contradicts to \eqref{eq:UI_generalized} by setting $A=\{j,k\}$.

For the sufficient part, with condition (i) and (ii), as $w=M+1$, it is easy to see that $\mathcal{I}_i \nsubseteq \cup_{j\in[M]_i} (\mathcal{I}_j + \tau_j)$ for any $i$ and any relative shifts $\tau_j$.
It implies that $\mathcal{C}$ is UI.
\qed
\end{proof}

We are ready to prove the UI property of the CRTm construction.
\begin{theorem}
Any $M$ sequences from the CRTm construction form an equi-difference UI sequence set of length ${p_M}(2M-1)$ and constant-weight $M+1$.
\label{thm:crtm}
\end{theorem}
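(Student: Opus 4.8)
The plan is to reduce everything to Lemma~\ref{lem:eqc} and then verify its two conditions by computing directly in the CRT coordinate ring $\mathbb{Z}_{p_M}\otimes\mathbb{Z}_{2M-1}$. The equi-difference property, the constant weight $M+1$, and the length $p_M(2M-1)$ are immediate from the construction: for $j<p_M$ the set $\widehat{\I}_j$ is the arithmetic progression generated by $(j,1)=f(g_j)$, for $j=p_M$ it is generated by $(1,0)$, and in each case the $M+1$ listed points are distinct because $0,1,\ldots,M$ are distinct modulo both $2M-1$ and $p_M$ (this is where $M<p_M$ is used). So it remains to establish conditions (i) and (ii) of Lemma~\ref{lem:eqc} for an arbitrarily fixed $M$-subset.

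For condition (i) I would read off $H_{\I_a\I_b}(\tau)$ from the coordinates. Writing $\tau\leftrightarrow(s,t)$, a common point of $\widehat{\I}_a$ and $\widehat{\I}_b+(s,t)$ is a pair $(y,z)$ with $y,z\in\{0,\ldots,M\}$ satisfying $y\equiv z+t\pmod{2M-1}$ and $ay\equiv bz+s\pmod{p_M}$. Since $|y-z|\le M<2M-1$, the second-coordinate equation pins $y-z$ to at most two integer values, and to two only when $t\in\{M-1,M\}$ (where both $t$ and $t-(2M-1)$ lie in $\{-M,\ldots,M\}$). On each such branch $y-z$ is a fixed integer and $z$ runs over consecutive residues, so $ay-bz=(a-b)z+\text{const}$ is injective modulo $p_M$ (using $a\ne b$ and $M<p_M$); hence each branch contributes at most one point and $H_{\I_a\I_b}\le 2$. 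When one of the two sequences is $\I_{p_M}$ its second coordinate is identically zero, forcing a single admissible index and giving $H\le 1$. This proves (i) and, incidentally, shows that $H_{\I_a\I_b}=2$ can occur only for two slope sequences.

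For condition (ii) the central object is $d^*(\I_a)\cap d^*(\I_b)$. I would first show that two slope sequences share a nonzero difference only through a wrap-around pair with index difference $\pm(2M-1)$, which yields exactly two candidate identities, namely $C_1\colon (M-1)a\equiv -Mb$ and $C_2\colon Ma\equiv -(M-1)b$ modulo $p_M$; subtracting one from the other forces $a\equiv b$, so $C_1$ and $C_2$ are mutually exclusive, and each determines its partner $b$ uniquely given $a$. Moreover $\I_0$ and $\I_{p_M}$ share no difference with any other sequence, since the relevant coordinate (first for $\I_0$, second for $\I_{p_M}$) vanishes identically along their difference sets. Hence condition (ii) is automatic whenever $\max_{j\in[M]_i}H_{\I_i\I_j}\le 1$, and it suffices to treat $H_{\I_i\I_k}=2$ with $i,k$ slope sequences. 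In that case I would analyse $\set{T}_{i,k}$ explicitly to pin down the two removed indices of $\I_{i,k,\tau_k}=\I_i\setminus\{y_1g_i,y_2g_i\}$: they are $\{0,M-1\}$ or $\{1,M\}$ when $i,k$ are linked by $C_1$, and $\{0,M\}$ when they are linked by $C_2$.

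The verification then becomes a short finite check. If some other chosen index $j$ shares a difference with $\I_i$, it must do so through the condition complementary to the one linking $i$ and $k$: by uniqueness of the partner and the incompatibility of $C_1,C_2$, a same-condition coincidence would force $j=k$. The shared $i$--$j$ difference is realised inside $\I_i$ only by the index pairs $\{(M-1,0),(M,1)\}$ (for $(M-1)g_i$) or $\{(M,0)\}$ (for $Mg_i$), and one checks that each of the removal sets listed above meets every such pair, so the difference is destroyed and $d^*(\I_{i,k,\tau_k})\cap d^*(\I_j)=\emptyset$. I expect the main obstacle to be precisely this last bookkeeping together with the mutual-exclusivity of $C_1$ and $C_2$: this is what replaces the easy weight inequality $w>(M-1)\lambda_c$ (unavailable here since $\lambda_c=2$ and $w=M+1$), and it is where the special roles of the indices $0,M-1,M$ and of the degenerate sequences $\I_0,\I_{p_M}$ must be tracked with care.
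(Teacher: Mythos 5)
Your proposal is correct and follows essentially the same route as the paper's proof: reduce to Lemma~\ref{lem:eqc}, then classify coincidences in $\mathbb{Z}_{p_M}\otimes\mathbb{Z}_{2M-1}$ via the wrap-around congruence $y_1-y_1'\equiv y_2-y_2'\pmod{2M-1}$, which forces the colliding indices into $\{0,1,M-1,M\}$; your identities $C_1,C_2$, their mutual exclusivity (subtraction forcing $a\equiv b$), and the check that every removal set meets every realizing pair are precisely the paper's ``collided property'' and its final shift-contradiction in algebraic form, including the special treatment of $\I_{p_M}$ (and $\I_0$). The only cosmetic difference is that you establish condition~(i) by a direct two-branch correlation count (injectivity of $(a-b)z$ on each branch) rather than reading it off the difference-set classification as the paper does.
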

\begin{proof}
Observe that $\widehat{\I}_j$ has the generator $(j,1)$ for $j=0,1,\ldots, p_M-1$ and generator $(1,0)$ when $j=p_M$.
Thus, the CRTm construction produces $p_M+1$ equi-difference sequences of length $p_M(2M-1)$.
We define $d^*(\widehat{\I}_j)$ in the same way as $d^*(\mathcal{I}_j)$, but with the addition and subtraction done in $\mathbb{Z}_{p_M} \otimes \mathbb{Z}_{2M-1}$ instead of $\mathbb{Z}_{p_M(2M-1)}$.
It is sufficient to show that each obtained sequence $\widehat{\I}_i$, $i\in\{0,1,\ldots,p_M\}$, satisfies the two conditions in Lemma~\ref{lem:eqc}.
Note that, if $H_{\widehat{\I}_i\widehat{\I}_j}=1$ for any $j\neq i$, then both the two conditions of Lemma~\ref{lem:eqc} hold for $i$.

First, consider $i=p_M$.
We claim that $H_{\widehat{\I}_{p_M}\widehat{\I}_j}=1$.
Suppose to the contradiction that $H_{\widehat{\I}_{p_M}\widehat{\I}_j}\geq 1$, that is, $d^*(\widehat{\I}_{p_M})\cap d^*(\widehat{\I}_j)\neq\emptyset$, for some $j\in\mathbb{Z}_{p_M}$.
Then
\[
(y_1,0)-(y_1',0) \equiv (jy_2,y_2) - (jy_2', y_2') \quad \text{in } \mathbb{Z}_{p_M}\otimes\mathbb{Z}_{2M-1},
\]
for $y_1,y_1',y_2,y_2'\in\{0,1,\ldots,M\}$ with $y_1\neq y_1'$ and $y_2\neq y_2'$.
By equating the second components on both sides, we have $y_2=y_2'$, a contradiction to $y_2\neq y_2'$.
Hence $\mathcal{I}_{p_M}$ possesses the two conditions in Lemma~\ref{lem:eqc}.
\medskip

Second, consider $i\in\mathbb{Z}_{p_M}$.
If $H_{\widehat{\I}_i\widehat{\I}_j}=1$ for any $j\neq i$, then we are done; otherwise, $H_{\widehat{\I}_i\widehat{\I}_j}\geq 2$ for some $j\in\mathbb{Z}_{p_M}\setminus\{i\}$, i.e., $d^*(\widehat{\I}_i)\cap d^*(\widehat{\I}_j)\neq\emptyset$.
Note that $p_M$ is not a candidate for $j$ due to the first part.
$d^*(\widehat{\I}_i)\cap d^*(\widehat{\I}_j)\neq\emptyset$ implies that 
\[
(iy_1, y_1) - (iy_1', y_1') \equiv (jy_2, y_2) - (jy_2', y_2') \quad \text{in } \mathbb{Z}_{p_M}\otimes\mathbb{Z}_{2M-1},
\]
for some $y_1,y_1',y_2,y_2'\in\{0,1,\ldots,M\}$ with $y_1\neq y_1'$ and $y_2\neq y_2'$.
By equating the second components on both sides, we have $y_1-y_1' \equiv y_2-y_2'$ (mod $2M-1$).
Since $0 \leq y_1,y_1',y_2,y_2' \leq M$, there are only five possible solutions to $y_1-y_1'$ and $y_2-y_2'$, as follows.
\begin{align}
y_1-y_1'&=y_2-y_2';  \label{eq:c01}\\
y_1-y_1'=M, & \ \ \  y_2-y_2'=-(M-1); \label{eq:c11}\\
y_1-y_1'=-M, & \ \ \ y_2-y_2'=M-1; \label{eq:c12}\\
y_1-y_1'=-(M-1), & \ \ \ y_2-y_2'=M; \label{eq:c02} \\
y_1-y_1'=M-1, & \ \ \ y_2-y_2'=-M. \label{eq:c03}
\end{align}
If $y_1-y_1'=y_2-y_2'$, from the first component, we have $(i-j) (y_1-y_1') \equiv 0$ (mod $p_M$), which implies that $i=j$ or $y_1=y_1'$ due to $i,j\in\mathbb{Z}_{p_M}$ and $0\leq y_1,y_1'\leq M$.
This contradicts the assumption that $i\neq j$ and $y_1\neq y_1'$.
Hence \eqref{eq:c01} can be excluded.
The remaining four possible solutions \eqref{eq:c11}--\eqref{eq:c03} imply respectively \eqref{eq:solution_1}--\eqref{eq:solution_4}.
\begin{align}
(y_1,y_1')&=(M,0) &\text{and} \quad (y_2,y_2')&=(0,M-1) \text{ or } (1,M); \label{eq:solution_1} \\
(y_1,y_1')&=(0,M) &\text{and} \quad (y_2,y_2')&=(M-1,0) \text{ or } (M,1); \label{eq:solution_2} \\
(y_1,y_1')&=(0,M-1) \text{ or } (1,M) &\text{and} \quad (y_2,y_2')&=(M,0); \label{eq:solution_3} \\
(y_1,y_1')&=(M-1,0) \text{ or } (M,1) &\text{and} \quad (y_2,y_2')&=(0,M). \label{eq:solution_4}
\end{align}

Intuitively speaking, from the construction of $$\widehat{\I}_j=\Big\{\big(0,0\big), \big(j,1\big), \big(2j,2\big), \ldots,\big(j(M-1),M-1\big),\big(jM,M\big)\Big\},$$ we call elements $(0,0)$, $(j,1)$, $(j(M-1),M-1)$ and $(jM,M)$ the head, second-head, second-tail and tail, respectively.
The above arguments conclude the following property, say \emph{collided property}.
\begin{itemize}
\item[] If there is a pair of repeated elements between two sequences under some relative shift, the two elements must be $\{\text{head,~tail}\}$ of one sequence, and $\{\text{head,~second-tail}\}$ or $\{\text{second-head,~tail}\}$ of another.
\end{itemize}
Back to $\widehat{\I}_i$, the collided property immediately implies $H_{\widehat{\I}_i\widehat{\I}_j}\leq 2$ for any $j\neq i$, i.e., Lemma~\ref{lem:eqc}(i).
Now, it remains to show that Lemma~\ref{lem:eqc}(ii) holds as well.
Consider $k\in\mathcal{B}_i$.
If $H_{\widehat{\I}_i\widehat{\I}_k}=1$, then the result follows.
If $H_{\widehat{\I}_i\widehat{\I}_k}\geq 2$ and $H_{\widehat{\I}_{i,k,\tau_k}\widehat{\I}_j}\geq 2$, by the definition of $\widehat{\I}_{i,k,\tau_k}$ and condition (i), we have  $H_{\widehat{\I}_i\widehat{\I}_k}=2$, $H_{\widehat{\I}_i\widehat{\I}_j}=2$ and $H_{\widehat{\I}_{i,k,\tau_k}\widehat{\I}_j}=2$.
By the collided property, both the repeated elements of $\I_k$ and $\I_j$ are in the form $\{\text{head,~tail}\}$, while the repeated elements of $\I_i$ with respect to $\I_k$ and $\I_j$ are in the form $\{\text{head,~second-tail}\}$ (or $\{\text{second-head,~tail}\}$) and $\{\text{second-head,~tail}\}$ (or $\{\text{head,~second-tail}\}$).
Since in a single sequence $\{\text{second-head,~tail}\}$ is a shifted version of $\{\text{head,~second-tail}\}$, the two $\{\text{head,~tail}\}$ pairs of $\I_k$ and $\I_j$ are repeated under some relative shift, which contradicts to the collided property.
Hence we complete the proof.
\qed
\end{proof}


\section{A Tight Asymptotic Lower Bound on Sequence Period }\label{sec:optimal}
In this section, we derive an asymptotic lower bound on sequence period for the equi-difference structure and $w=M+1$, and then show this lower bound is tight by using the CRTm construction.

\subsection{Preliminaries}
We start with three basic results which will be useful to derive a lower bound on sequence period.

The following necessary condition for $\mathcal{C}=\{\mathcal{I}_1, \mathcal{I}_2, \ldots, \mathcal{I}_M\}$ of constant-weight $M+1$ to be UI directly follows from Lemma~\ref{lem:eqc}(ii), and thus its proof is omitted here.

\begin{proposition}
Let $\mathcal{C}=\{\mathcal{I}_1, \mathcal{I}_2, \ldots, \mathcal{I}_M\}$ be a UI sequence set of constant-weight $M+1$.
Then for $i\neq j\in[M]$, one has
\[
H_{\mathcal{I}_{i,k,\tau_k}\mathcal{I}_{j,l,\tau_l}}\leq 1, \text{ i.e., } d^*(\set{I}_{i,k,\tau_k}) \cap d^*(\set{I}_{j,l,\tau_l}) = \emptyset
\]
for $k\in\set{B}_i$, $\tau_k\in\set{T}_{i,k}$, $l\in\set{B}_j$, $\tau_l\in\set{T}_{j,l}$ and $\{i,k\}\neq\{j,l\}$.
\label{prop:nec}
\end{proposition}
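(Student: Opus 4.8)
The plan is to reduce the claim to a single application of Lemma~\ref{lem:eqc}(ii) by exploiting the fact that each truncated set $\set{I}_{j,l,\tau_l}$ is a subset of the full sequence $\set{I}_j$. First I would record the elementary observation that $\set{A}\subseteq\set{B}$ implies $d^*(\set{A})\subseteq d^*(\set{B})$, since every nonzero difference of two elements of $\set{A}$ is also a difference of two elements of $\set{B}$. Applying this to $\set{I}_{i,k,\tau_k}=\set{I}_i\setminus(\set{I}_k+\tau_k)\subseteq\set{I}_i$ and to $\set{I}_{j,l,\tau_l}=\set{I}_j\setminus(\set{I}_l+\tau_l)\subseteq\set{I}_j$ yields $d^*(\set{I}_{i,k,\tau_k})\subseteq d^*(\set{I}_i)$ and $d^*(\set{I}_{j,l,\tau_l})\subseteq d^*(\set{I}_j)$. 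It therefore suffices to establish either $d^*(\set{I}_{i,k,\tau_k})\cap d^*(\set{I}_j)=\emptyset$ or $d^*(\set{I}_{j,l,\tau_l})\cap d^*(\set{I}_i)=\emptyset$, both of which are precisely the conclusions delivered by Lemma~\ref{lem:eqc}(ii) in one orientation or the other.

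The crux is verifying that the hypotheses of Lemma~\ref{lem:eqc}(ii) can be met in at least one of the two orientations. To invoke the lemma in the form $d^*(\set{I}_{i,k,\tau_k})\cap d^*(\set{I}_j)=\emptyset$ I need $k\in\set{B}_i$, $\tau_k\in\set{T}_{i,k}$, and $j,k$ to be \emph{distinct} indices in $[M]_i$; the given data supply all of these except possibly $j\neq k$. Symmetrically, to invoke it as $d^*(\set{I}_{j,l,\tau_l})\cap d^*(\set{I}_i)=\emptyset$ I need $i\neq l$. Hence the only way both orientations could fail simultaneously is if $j=k$ and $i=l$ at once.

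This is exactly where the hypothesis $\{i,k\}\neq\{j,l\}$ does its work. Since $i\neq j$ is assumed, the equality $\{i,k\}=\{j,l\}$ could only arise through $i=l$ together with $k=j$; thus $\{i,k\}\neq\{j,l\}$ forces $j\neq k$ or $i\neq l$. In the first case Lemma~\ref{lem:eqc}(ii) gives $d^*(\set{I}_{i,k,\tau_k})\cap d^*(\set{I}_j)=\emptyset$, and combined with $d^*(\set{I}_{j,l,\tau_l})\subseteq d^*(\set{I}_j)$ this yields $d^*(\set{I}_{i,k,\tau_k})\cap d^*(\set{I}_{j,l,\tau_l})=\emptyset$; in the second case the symmetric argument closes the proof. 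I expect the only genuine subtlety to be this bookkeeping of indices---in particular, resisting the temptation to apply the lemma naively with $\set{I}_j$ before checking $j\neq k$---while the difference-set inclusion and the final intersection step are routine.
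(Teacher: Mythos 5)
Your proof is correct and takes essentially the approach the paper intends: the paper asserts that Proposition~\ref{prop:nec} ``directly follows from Lemma~\ref{lem:eqc}(ii)'' and omits the details, and your argument is precisely that deduction spelled out, via the monotonicity $d^*(\set{I}_{j,l,\tau_l})\subseteq d^*(\set{I}_j)$ and the check that $\{i,k\}\neq\{j,l\}$ together with $i\neq j$, $k\in\set{B}_i$, $l\in\set{B}_j$ forces $j\neq k$ or $i\neq l$, so the lemma applies in at least one orientation. The index bookkeeping you flag as the only subtlety is exactly the content the paper left implicit, and you handle it correctly.
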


The following result provides an upper bound on $|d^*(\I_j) \cap d^*(\I_j)|$ for $i\neq j$.

\begin{lemma}
Let $\I_1$ and $\I_2$ be two equi-difference sequences of the same period.
If $H_{\mathcal{I}_1\mathcal{I}_2}\leq2$, then
\[
\big|d^*(\set{I}_1) \cap d^*(\set{I}_2)\big| \leq 4.
\]
\label{lem:equi}
\end{lemma}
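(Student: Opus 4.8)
The plan is to exploit the equi-difference structure to translate both the difference sets and the Hamming cross-correlation into counting coincidences among index pairs, and then to read off the bound directly from the hypothesis $H_{\I_1\I_2}\le 2$. Write $\I_1=\{0,g_1,2g_1,\ldots,(w-1)g_1\}$ and $\I_2=\{0,g_2,\ldots,(w-1)g_2\}$ for generators $g_1,g_2\in\mathbb{Z}_L$, where $w$ is the common weight. Then $d^*(\I_1)=\{mg_1:1\le|m|\le w-1\}$ and $d^*(\I_2)=\{ng_2:1\le|n|\le w-1\}$, so every common difference $\delta\in d^*(\I_1)\cap d^*(\I_2)$ can be written as $\delta=mg_1=ng_2$ with $m,n\in\{\pm1,\ldots,\pm(w-1)\}$. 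The crucial reformulation is that, since the $w$ multiples of each generator are distinct (the weight is exactly $w$), $H_{\I_1\I_2}(\tau)=|\{(a,b)\in\{0,\ldots,w-1\}^2: ag_1-bg_2=\tau\}|$; hence $H_{\I_1\I_2}\le 2$ says that no shift $\tau$ admits three distinct index pairs $(a,b)$ with $ag_1-bg_2=\tau$.

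Next I would classify the common differences by the signs of a representation: let $A_{++},A_{+-},A_{-+},A_{--}$ collect those $\delta$ admitting a representation $\delta=mg_1=ng_2$ with the indicated signs of $(m,n)$, where $|m|,|n|\in\{1,\ldots,w-1\}$. Their union is $d^*(\I_1)\cap d^*(\I_2)$, so it suffices to bound each set by $1$. For $A_{++}$: if $\delta=m_1g_1=n_1g_2$ and $\delta'=m_2g_1=n_2g_2$ are distinct elements, then at $\tau=0$ the index pairs $(0,0)$, $(m_1,n_1)$, $(m_2,n_2)$ are pairwise distinct (the last two differ because $m_1g_1=\delta\ne\delta'=m_2g_1$) and all satisfy $ag_1-bg_2=0$, forcing $H_{\I_1\I_2}(0)\ge 3$, a contradiction; hence $|A_{++}|\le 1$. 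For $A_{+-}$: two distinct elements give representations $m_1g_1+n_1'g_2=0$ and $m_2g_1+n_2'g_2=0$ with $m_i,n_i'\in\{1,\ldots,w-1\}$ and $m_1\ne m_2$, say $m_1<m_2$ (as $m_1g_1\ne m_2g_1$). Evaluating at the shift $\tau=m_2g_1$, the three pairs $(m_2,0)$, $(0,n_2')$, and $(m_2-m_1,n_1')$ all lie in $\{0,\ldots,w-1\}^2$, are pairwise distinct, and each satisfies $ag_1-bg_2=\tau$ (for the last, $(m_2-m_1)g_1-n_1'g_2=m_2g_1-(m_1g_1+n_1'g_2)=m_2g_1$), again forcing $H_{\I_1\I_2}(\tau)\ge 3$; hence $|A_{+-}|\le 1$.

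Finally, under $\delta\mapsto-\delta$ one has $A_{--}=-A_{++}$ and $A_{-+}=-A_{+-}$, so $|A_{--}|=|A_{++}|\le 1$ and $|A_{-+}|=|A_{+-}|\le 1$, whence $|d^*(\I_1)\cap d^*(\I_2)|\le 4$. I expect the main obstacle to be bookkeeping rather than conceptual: I must verify that the three exhibited index pairs are genuinely distinct and lie in $\{0,\ldots,w-1\}^2$ (this is exactly where $m_2-m_1\le w-2$ and $n_i'\le w-1$ are used), and I must account for the possibility that a single $\delta$ admits several sign-representations when $g_1$ or $g_2$ has small additive order. The latter only places $\delta$ into more than one class, which can merely decrease the total count, so the four-way bound is unaffected; the former is handled by the explicit coordinates above.
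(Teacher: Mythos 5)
Your proof is correct and follows essentially the same route as the paper's: both decompose the common differences into the four sign classes (your $A_{\pm\pm}$ are exactly the paper's intersections $d_a^*(\I_1)\cap d_a^*(\I_2)$, etc.) and show that two common differences with the same sign pattern force three coincident index pairs at a single shift, hence $H_{\I_1\I_2}\geq 3$. The only cosmetic differences are that you bound each class by $1$ directly (using the negation symmetry $\delta\mapsto-\delta$ to dispose of two classes) and pick the witness shift $\tau=m_2g_1$, whereas the paper argues by contradiction from $|d^*(\I_1)\cap d^*(\I_2)|\geq 5$ via pigeonhole and exhibits the shifts $0$, $-(w-1)g_2$, $(w-1)g_1$ in the four cases.
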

\begin{proof}
Let $g_1$ and $g_2$ are the generators of $\I_1$ and $\I_2$, respectively.
Then $\I_1$ and $\I_2$ are of the form:
\[
 \I_1 = \{0,g_1,2g_1,\ldots,(w-1)g_1\}; \ \  \I_2 = \{0,g_2,2g_2,\ldots,(w-1)g_2\}.
\]
Consider the following four sets:
\begin{align*}
d_a^*(\I_{1})&= \{g_1, 2g_1, \ldots, (w-1)g_1\}; \\
d_b^*(\I_{1})&= \{-g_1, -2g_1, \ldots, -(w-1)g_1\}; \\
d_a^*(\I_{2})&= \{g_2, 2g_2, \ldots, (w-1)g_2\};   \\
d_b^*(\I_{2})&= \{-g_2, -2g_2, \ldots, -(w-1)g_2\}.
\end{align*}
Obviously, $d_a^*(\I_1)\cup d_b^*(\I_1)=d^*(\I_1)$ and $d_a^*(\I_2)\cup d_b^*(\I_2)=d^*(\I_2)$.

Now we prove this lemma by contradiction.
Suppose $|d^*({\set{I}_1}) \cap d^*({\set{I}_2})| \geq 5$.
Then we have
\begin{align*}
5 &\leq |d^*({\I_1})\cap d^*({\I_2})| = \Big|\big(d_a^*(\I_{1})\cup d_b^*(\I_{1})\big) \cap \big(d_a^*(\I_{2})\cup d_b^*(\I_{2})\big)\Big| \\
&= \left|\Big(d_a^*(\I_{1})\cap d_a^*(\I_{2})\Big) \cup \Big(d_a^*(\I_{1})\cup d_b^*(\I_{2})\Big) \cup \Big(d_b^*(\I_{1})\cap d_a^*(\I_{2})\Big) \cup \Big(d_b^*(\I_{1})\cup d_b^*(\I_{2})\Big)\right|
\end{align*}
which implies that at least one of the four sets $d_a^*(\I_{1})\cap d_a^*(\I_{2})$, $d_a^*(\I_{1})\cup d_b^*(\I_{2})$, $d_b^*(\I_{1})\cap d_a^*(\I_{2})$ and $d_b^*(\I_{1})\cup d_b^*(\I_{2})$ has cardinality at least 2.
There are four cases as follows.
\begin{enumerate}[Case 1.]
\item If $|\big(d_a^*(\I_{1})\cap d_a^*(\I_{2})\big)|\geq 2$, then $H_{\I_1\I_2}(0)\geq 3$.
\item If $|\big(d_a^*(\I_{1})\cap d_b^*(\I_{2})\big)|\geq 2$, then $H_{\I_1\I_2}(-(w-1)g_2)\geq 3$.
\item If $|\big(d_b^*(\I_{1})\cap d_a^*(\I_{2})\big)|\geq 2$, then $H_{\I_1\I_2}((w-1)g_1)\geq 3$.
\item If $|\big(d_b^*(\I_{1})\cap d_b^*(\I_{2})\big)|\geq 2$, then $H_{\I_1\I_2}(0)\geq 3$.
\end{enumerate}
Each of the four cases implies that $H_{\I_1\I_2}\geq 3$, which contradicts to the assumption that $H_{\I_1\I_2}\leq 2$.
This completes the proof.
\qed
\end{proof}

We illustrate Lemma~\ref{lem:equi} with the following example.

\begin{example}\label{ex:UI_40}
Consider the following equi-difference sequences $\I_1$, $\I_2$ and $\I_3$ of period 40 with generators $g_1=7$, $g_2=9$, $g_3=17$ and Hamming weight $6$:
\begin{align*}
  \I_1&= \{0,7,14,21,28,35\}, \\
  \I_2&= \{0,9,18,27,36,5\}, \\
  \I_3 &= \{0,17,34,11,28,5\}.
\end{align*}
We have $H_{\I_1\I_2}=H_{\I_2\I_3}=H_{\I_1\I_3}=2$ while
$|d^*(\I_1)\cap d^*(\I_2)|=|d^*(\I_2)\cap d^*(\I_3)|=|\{5,35\}| =2$ and $|d^*(\I_1)\cap d^*(\I_3)|=|\{5,12,28,35\}| =4$.
\end{example}

We also need the following previously known result to quantify the maximum number of exceptional sequences if their non-zero difference sets are mutually disjoint.

\begin{lemma}[\cite{SWC10}]
Consider $\set{I}$s in $\mathbb{Z}_L$ whose Hamming weights are all equal to $u$.
Let $\pi(L,2u-2)$ be the number of distinct relatively prime divisors of $L$ between $2$ and $2u-2$.
There are at most $\pi(L,2u-2)$ exceptional $\set{I}$s such that their non-zero difference sets are mutually disjoint.
\label{lemma:shum}
\end{lemma}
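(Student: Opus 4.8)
The plan is to attach to every exceptional set a nontrivial subgroup of $\mathbb{Z}_L$ that is contained in its difference set, and then to argue that the disjointness hypothesis forces any two of these subgroups to meet only at $0$. First I would restate the exceptional condition in terms of the full (symmetric) difference set $S(\set{I}) := \{a-b : a,b\in\set{I}\}$, which contains $0$. Since $d^*(\set{I}) = S(\set{I})\setminus\{0\}$, a weight-$u$ set $\set{I}$ is exceptional precisely when $|S(\set{I})|\le 2u-2$. The goal then becomes extracting a subgroup from this cardinality deficiency.

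The core step is an additive-combinatorics argument via Kneser's theorem. Applying it to $A=\set{I}$ and $B=-\set{I}$, with $H$ the stabilizer $\{g\in\mathbb{Z}_L : g+S(\set{I})=S(\set{I})\}$, yields $|S(\set{I})|\ge |\set{I}+H|+|(-\set{I})+H|-|H|\ge 2u-|H|$, because each of $\set{I}+H$ and $(-\set{I})+H$ has at least $u$ elements. Combined with $|S(\set{I})|\le 2u-2$, this forces $|H|\ge 2$, so $H$ is a nontrivial subgroup. Since $H$ stabilizes $S(\set{I})$ and $0\in S(\set{I})$, we get $H\subseteq S(\set{I})$, hence $H\setminus\{0\}\subseteq d^*(\set{I})$ and $|H|\le |S(\set{I})|\le 2u-2$. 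Thus every exceptional $\set{I}$ determines a subgroup $H(\set{I})\le\mathbb{Z}_L$ whose order $h(\set{I})$ is a divisor of $L$ with $2\le h(\set{I})\le 2u-2$, and all of whose nonzero elements lie in $d^*(\set{I})$.

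Now I would consider a family of exceptional sets whose nonzero difference sets are mutually disjoint. For any two members $\set{I},\set{I}'$, a nonzero element of $H(\set{I})\cap H(\set{I}')$ would lie in both $d^*(\set{I})$ and $d^*(\set{I}')$, contradicting disjointness; hence $H(\set{I})\cap H(\set{I}')=\{0\}$. Because $\mathbb{Z}_L$ is cyclic it has a unique subgroup of each order dividing $L$, and the intersection of the subgroups of orders $h$ and $h'$ is the subgroup of order $\gcd(h,h')$. Trivial intersection therefore means $\gcd(h(\set{I}),h(\set{I}'))=1$; in particular the orders are distinct. So the assignment $\set{I}\mapsto h(\set{I})$ maps the family injectively to a pairwise relatively prime collection of divisors of $L$ in the range $[2,2u-2]$, and the size of the family is at most the largest such collection, which is exactly $\pi(L,2u-2)$.

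The main obstacle is the structural (Kneser) step: showing that the pure cardinality bound $|d^*(\set{I})|<2u-2$ forces $S(\set{I})$ to be a union of cosets of a nontrivial subgroup. Once that periodicity is established, the remainder — passing from disjoint difference sets to trivially intersecting subgroups, then to pairwise coprime orders, and finally to the count — is elementary cyclic-group theory. I would also verify the boundary behaviour of the range $[2,2u-2]$, e.g. that $h(\set{I})$ can attain values close to $2u-2$, which the weight-$5$ set $\{0,10,15,25,30\}\subseteq\mathbb{Z}_{35}$ illustrates: here $d^*(\set{I})=H\setminus\{0\}$ for the subgroup $H$ of order $7\le 2\cdot 5-2$, confirming the stated interval is correct and attained.
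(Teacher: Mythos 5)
Your proposal is correct, and it is essentially the proof of the cited source \cite{SWC10}, which this paper quotes without reproving: the original argument likewise applies Kneser's addition theorem to $\mathcal{I}-\mathcal{I}$ to extract a nontrivial stabilizer subgroup $H$ with $H\setminus\{0\}\subseteq d^*(\mathcal{I})$ and $2\le |H|\le 2u-2$, then uses disjointness of the difference sets plus the uniqueness of subgroups of each order in the cyclic group $\mathbb{Z}_L$ to force pairwise coprime orders, yielding the bound $\pi(L,2u-2)$. Your verification of the boundary case via $\{0,10,15,25,30\}\subseteq\mathbb{Z}_{35}$ (the subgroup of order $7$) matches the paper's own example of an exceptional sequence, so nothing is missing.
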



\subsection{A lower bound on sequence period}
In order to obtain the minimal number of mutually disjoint non-zero difference sets in a UI sequence set, we set up the following definitions.

For $i\in[M]$, let $r_i$ be the smallest integer in $\set{B}_{i}$.
Define
\[
\set{F}:=\left\{i\in[M] :\, i=r_{r_i} \right\}.
\]
$\set{F}$ can be further divided into the following two disjoint subsets:
\begin{align*}
&\set{F}_1:=\left\{i \in [M] :\, i=r_{r_i}, i>r_i \right\}, \\
&\set{F}_2:=\left\{i \in [M] :\, i=r_{r_i}, i<r_i \right\}.
\end{align*}
Obviously, $|\set{F}_1|=|\set{F}_2|=|\set{F}|/2$.
In Example~\ref{ex:UI_35}, $\set{F}=\{2,4\}$ and $\set{F}_1=\{4\}$, $\set{F}_2=\{2\}$.
The motivation of these definitions will be clear after Theorem~\ref{thm:bound1}.

Given a positive integer $M$, let $L^{e}_{\min}(M)$ be the smallest period $L$ such that an equi-difference UI sequence set with constant-weight $M+1$ exists.
We are ready for our main result in this section.

\begin{theorem}
\[ \liminf_{M\rightarrow\infty} \frac {L^e_{\min}(M)}{2M^2} \geq 1.
\]\label{thm:bound1}
\end{theorem}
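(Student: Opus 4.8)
The plan is to construct a large collection of sequences whose non-zero difference sets are pairwise disjoint, and then to invoke the counting bound of Lemma~\ref{lemma:shum} to force $L$ to be large. The key is Proposition~\ref{prop:nec}, which tells us that for a UI sequence set of constant-weight $M+1$, the ``truncated'' difference sets $d^*(\mathcal{I}_{i,k,\tau_k})$ are pairwise disjoint across distinct index-pairs $\{i,k\}$. First I would fix, for each $i\in[M]$, the distinguished index $r_i$ (the smallest element of $\set{B}_i$) and a distinguished shift $\tau\in\set{T}_{i,r_i}$, and form the set $\mathcal{I}_{i,r_i,\tau}$ of Hamming weight $M-1$ (since $H_{\mathcal{I}_i\mathcal{I}_{r_i}}=2$ removes exactly two elements from the weight-$(M+1)$ set $\mathcal{I}_i$ — unless $\set{B}_i$ only contains weight-$1$ collisions, a boundary case I would handle by noting that then the full $d^*(\mathcal{I}_i)$ is already disjoint from the others). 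The role of $\set{F}$, $\set{F}_1$, $\set{F}_2$ becomes clear here: the pair $\{i,r_i\}$ and $\{r_i,r_{r_i}\}$ coincide exactly when $i=r_{r_i}$, so restricting attention to $i\in\set{F}$ and splitting off $\set{F}_1$ avoids double-counting the same unordered pair, yielding $|\set{F}|/2=|\set{F}_1|$ genuinely distinct index-pairs.

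Next I would lower-bound $|\set{F}_1|$ in terms of $M$. The map $i\mapsto r_i$ need not be an involution, but the definition of $\set{F}$ picks out exactly the indices lying in a $2$-cycle of this map, and $\set{F}_1$ selects one representative from each such $2$-cycle. The main quantitative step is to argue that asymptotically almost all indices participate in such pairs, i.e. $|\set{F}_1|/M\to 1/2$, so that we obtain roughly $M/2$ pairwise-disjoint difference sets $d^*(\mathcal{I}_{i,r_i,\tau})$, each arising from an equi-difference set of weight $M-1$. Each such truncated set, being a sub-progression, is itself equi-difference and hence exceptional in the sense of the excerpt, so $|d^*(\mathcal{I}_{i,r_i,\tau})|\le 2(M-1)-2$ rather than the generic $2(M-1)$. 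Applying Lemma~\ref{lemma:shum} with $u=M-1$ gives that the number of mutually-disjoint exceptional weight-$(M-1)$ sets is at most $\pi(L,2M-4)$, the number of relatively prime divisors of $L$ between $2$ and $2M-4$.

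Assembling these pieces, I would combine $|\set{F}_1|\le\pi(L,2M-4)$ with the elementary observation that $\pi(L,2M-4)$ distinct pairwise-coprime divisors of $L$, each at least $2$, force $L$ to be at least the product of the smallest such divisors; a counting/divisor argument then shows $\pi(L,2M-4)$ cannot greatly exceed $\sqrt{L}$ worth of divisors, so that $|\set{F}_1|\lesssim$ (something like $\sqrt{L}$) is too weak — instead the sharper route is to bound $L$ directly: having $\pi(L,2M-4)\ge|\set{F}_1|\approx M/2$ mutually coprime divisors each in the range $[2,2M-4]$, the product of any $t$ of them divides $L$, and optimizing forces $L\gtrsim 2M^2$. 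The hard part will be the asymptotic estimate $|\set{F}_1|\sim M/2$: one must rule out the possibility that the $r_i$-structure degenerates so that $\set{F}$ is small, and control the boundary cases where some $H_{\mathcal{I}_i\mathcal{I}_{r_i}}=1$; this is where the equi-difference hypothesis and Lemma~\ref{lem:equi} (bounding overlaps of two equi-difference difference sets by $4$) should be leveraged to show that an index with all cross-correlations equal to $1$ already contributes a full disjoint difference set, so no index is wasted. Once $|\set{F}_1|/M\to1/2$ is secured, the divisor-product inequality delivers $\liminf L^e_{\min}(M)/(2M^2)\ge1$ after taking $M\to\infty$.
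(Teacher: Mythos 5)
Your setup matches the paper's: you truncate each $\I_i$ by its worst interferer $r_i$ to get weight-$(M-1)$ sets whose difference sets are pairwise disjoint by Proposition~\ref{prop:nec}, and you correctly read off the role of $\set{F}$, $\set{F}_1$, $\set{F}_2$ as disambiguating the unordered pair $\{i,r_i\}$. But the endgame contains a genuine gap built on a false claim. ``Equi-difference hence exceptional'' is wrong: exceptional means $|d^*(\set{I})| < 2|\set{I}|-2$, and a generic equi-difference set $\{0,g,\ldots,(w-1)g\}$ has exactly $2w-2$ distinct nonzero differences $\pm g,\ldots,\pm(w-1)g$; it is exceptional only in degenerate cases such as $\I_1$ in Example~\ref{ex:UI_35}, where the generator has small order. (Moreover, the truncated sets need not be sub-progressions at all: the two elements deleted by $\I_{r_i}+\tau$ can sit anywhere in the progression; Lemma~\ref{lemma:shum} applies to arbitrary weight-$(M-1)$ sets, so this does not matter --- but it shows the structure you invoke is not there.) Your route is also internally inconsistent: if all $\approx M/2$ of your sets were exceptional and mutually disjoint, Lemma~\ref{lemma:shum} would force $L$ to have $\approx M/2$ pairwise relatively prime divisors, i.e.\ $L \geq 2^{M/2}$, flatly contradicting the CRTm construction's achievable $L \approx 2M^2$; and in any case a coprime-divisor product argument cannot produce the precise constant $2$ in $2M^2$.

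The missing idea is the packing inequality, which is the heart of the paper's proof, and it uses Lemma~\ref{lemma:shum} in the direction opposite to yours: at most $\pi(L,2M-4)$ of the disjoint sets are exceptional, so \emph{almost all} of them have \emph{large} difference sets, $|d^*|\geq 2M-4$, while exceptional ones still contribute $\geq M-2$. Since the difference sets are (nearly) disjoint subsets of $\mathbb{Z}_L\setminus\{0\}$, one gets $L-1 \geq \sum_{i\in[M]} |d^*(\set{I}^*_{i,r_i})| - 4|\set{F}_1|$, where the correction term comes from Lemma~\ref{lem:equi}: for $i\in\set{F}_1$ the pair $\{i,r_i\}$ is used twice and Proposition~\ref{prop:nec} does not separate $d^*(\set{I}^*_{i,r_i})$ from $d^*(\set{I}^*_{r_i,i})$, but their overlap is at most $4$. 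This yields $L \geq (2M-4)M - 2\pi(L,2M-4)(M-2) - 2M + 1$, and the prime number theorem gives $\pi(L,2M-4)=o(M)$, whence the liminf. Crucially, $|\set{F}_1|\leq M/2$ is needed only as an \emph{upper} bound on a negligible correction; your proposed ``main quantitative step'' $|\set{F}_1|\sim M/2$ is both unnecessary and unprovable --- the map $i\mapsto r_i$ can degenerate (e.g.\ $r_i=1$ for all $i\geq 2$ and $r_1=2$ gives $|\set{F}|=2$), so a proof hinging on a lower bound for $|\set{F}_1|$ would fail.
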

\begin{proof}
Consider an equi-difference sequence set $\mathcal{C}=\{\mathcal{I}_1, \mathcal{I}_2, \ldots, \mathcal{I}_M\}$ of constant-weight $M+1$.
For $i=1,2,\ldots,M$, let $\set{I}^*_{i,r_i}$ be a subset of $\set{I}_{i,r_i,\tau^*_{r_i}}$ such that $|\set{I}^*_{i,r_i}|=M-1$, where $\tau^*_{r_i}$ is the smallest element in $\set{T}_{i,r_i}$.

By Proposition~\ref{prop:nec}, the non-zero difference sets of the following distinct $M- |\set{F}| +|\set{F}_1|$ sequences are mutually disjoint:
\begin{equation}
\left\{ \set{I}^*_{i,r_i}:\, i\in\big([M]\setminus\set{F}\big)\cup\set{F}_1 \right\}.
\label{eq:exceptional_1}
\end{equation}
Similarly, the non-zero difference sets of the following distinct $M- |\set{F}|+|\set{F}_2|$ sequences are mutually disjoint:
\begin{equation}
\left\{ \set{I}^*_{i,r_i}:\, i\in\big([M]\setminus\set{F}\big)\cup\set{F}_2 \right\}.
\label{eq:exceptional_2}
\end{equation}
By Lemma~\ref{lemma:shum}, plugging $u=M-1$ implies that there are at most $\pi(L,2M-4)$ exceptional sequences in each of \eqref{eq:exceptional_1} and \eqref{eq:exceptional_2}.
By Proposition~\ref{prop:nec}, we further have
\[
\Big(d^*(\set{I}^*_{i_1,r_{i_1}}) \cup d^*(\set{I}^*_{r_{i_1},i_1})\Big) \bigcap  \Big(d^*(\set{I}^*_{i_2,r_{i_2}}) \cup d^*(\set{I}^*_{r_{i_2},i_2})\Big)  = \emptyset
\]
for any distinct $i_1, i_2 \in \set{F}_1$ or any distinct $i_1, i_2 \in \set{F}_2$.

Since the total number of distinct nonzero differences cannot be larger than the number of nonzeros in $\mathbb{Z}_L$, we have:
\begin{align}
L-1\geq & \sum_{i \in [M]\setminus\set{F}}\Big|d^*(\set{I}^*_{i,r_i})\Big| + \sum_{i \in \set{F}_1} \Big|d^*(\set{I}^*_{i,r_i}) \cup d^*(\set{I}^*_{r_i,i})\Big| \notag \\
        = & \sum_{i \in [M] }\Big|d^*(\set{I}^*_{i,r_i})\Big|- \sum_{i \in \set{F}_1}\Big |d^*(\set{I}^*_{i,r_i}) \cap d^*(\set{I}^*_{r_i,i})\Big|. \label{eq:lower_1}
\end{align}
By Lemma~\ref{lem:equi} and the condition that $H_{\set{I}_i,\set{I}_{r_i}}\leq 2$ for all $i$, we have
\begin{equation}\label{eq:lower_2}
\big|d^*(\set{I}^*_{i,r_i}) \cap d^*(\set{I}^*_{r_i,i})\big|\leq \big|d^*(\set{I}_{i}) \cap d^*(\set{I}_{r_i})\big| \leq 4.
\end{equation}
Combining \eqref{eq:lower_1} and \eqref{eq:lower_2} derives
\begin{align}
L \geq & \sum_{i \in [M] }\Big|d^*(\set{I}^*_{i,r_i})\Big|- 4\Big|\set{F}_1\Big|+1 \notag \\
  \geq & \Big(M-2\pi(L,2M-4)\Big)(2M-4) +  2\pi(L,2M-4)(M-2) - 4|\set{F}_1|+1 \notag \\
  \geq & (2M-4)M-2\pi(L,2M-4)(M-2)-2M +1. \label{eq:lower_3}
\end{align}
The second ``$\geq$'' of the above is due to the fact that there are at most $2\pi(L,2M-4)$ exceptional sequences in total $M$ sequences, while the last one follows from $|\set{F}_1|=|\set{F}|/2\leq M/2$.
Taking lim sup on both sides of \eqref{eq:lower_3} leads to
\begin{align*}
\liminf_{M\rightarrow\infty} \frac {L}{2M^2}  &\geq  \liminf_{M\rightarrow\infty} \frac {(2M-4)M}{2M^2}-\frac {2\pi(L,2M-4)(M-2)}{2M^2}-\frac{2M-1}{2M^2} \\
&= 1- \liminf_{M\rightarrow\infty} \frac{\pi(L,2M-4)}{M} =1.
\end{align*}
The last identity above is due to
\[
\liminf_{M\rightarrow\infty} \frac{\pi(L,2M-4)}{(2M-4) / \ln (2M-4)}\leq 1,
\]
which can be obtained by the prime number theorem.
Hence we complete the proof.
\qed
\end{proof}

By the CRTm construction, we show that the asymptotic lower bound in Theorem~\ref{thm:bound1} can be achieved.
\begin{theorem}
\[ \liminf_{M\rightarrow\infty} \frac {L^e_{\min}(M)}{2M^2} = 1.
\]
\label{thm:tight}
\end{theorem}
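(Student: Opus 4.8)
The plan is to pinch the quantity $L^e_{\min}(M)/(2M^2)$ between the lower bound already proved and an explicit upper bound furnished by the CRTm construction. Theorem~\ref{thm:bound1} gives $\liminf_{M\to\infty} L^e_{\min}(M)/(2M^2)\geq 1$, so it suffices to establish the matching upper bound $\liminf_{M\to\infty} L^e_{\min}(M)/(2M^2)\leq 1$.

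For the upper bound I would invoke Theorem~\ref{thm:crtm}: for every $M\geq 4$ the CRTm construction produces $p_M+1$ equi-difference sequences of length $p_M(2M-1)$ and constant weight $M+1$, and any $M$ of them form a UI set. Since $p_M+1>M$, such a choice of $M$ sequences is always possible, and therefore $L^e_{\min}(M)\leq p_M(2M-1)$ for every such $M$. Dividing by $2M^2$ yields
\[
\frac{L^e_{\min}(M)}{2M^2}\leq \frac{p_M(2M-1)}{2M^2}=\frac{p_M}{M}\cdot\frac{2M-1}{2M}.
\]

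The remaining step is the asymptotic behaviour of $p_M$, the smallest prime strictly larger than $M$. The key fact is that $p_M/M\to 1$ as $M\to\infty$; equivalently, the prime following $M$ satisfies $p_M=M+o(M)$. This is a standard consequence of the prime number theorem (the gap between consecutive primes near $M$ is $o(M)$), so taking $\liminf$ in the displayed inequality gives $\liminf_{M\to\infty} L^e_{\min}(M)/(2M^2)\leq \lim_{M\to\infty}(p_M/M)\cdot(2M-1)/(2M)=1$. Combining this with Theorem~\ref{thm:bound1} forces the $\liminf$ to equal $1$, which is the assertion.

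There is no substantial obstacle here: the two inequalities squeeze the target from both sides, and the only ingredient beyond the earlier theorems is the elementary analytic estimate $p_M\sim M$. I note that weaker inputs do not suffice: Bertrand's postulate alone gives only $p_M<2M$ and hence $\liminf\leq 2$, so the finer $p_M=M+o(M)$ from the prime number theorem is genuinely needed. The one bookkeeping point worth stating carefully is that the CRTm construction supplies at least $M$ sequences (indeed $p_M+1$) and that $p_M$ is coprime to $2M-1$ in the relevant range by Bertrand's postulate, so the upper bound on $L^e_{\min}(M)$ is legitimate for every sufficiently large $M$.
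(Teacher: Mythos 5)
Your proof is correct, but it reaches the matching upper bound by a genuinely different route than the paper. Because the statement is a $\liminf$, the paper does not need to control $p_M$ for general $M$: it exhibits a single good subsequence, taking $M=p-1$ as $p$ ranges over the primes (such $M$ is non-prime for $p\geq 5$), for which $p_M=p=M+1$ exactly, so the CRTm period is $(M+1)(2M-1)$ and the ratio $(M+1)(2M-1)/(2M^2)$ tends to $1$ using nothing deeper than the infinitude of primes. You instead prove the uniform bound $L^e_{\min}(M)\leq p_M(2M-1)$ for every $M\geq 4$ (which is legitimate, exactly as you argue: the construction supplies $p_M+1>M$ sequences and $\gcd(p_M,2M-1)=1$ by Bertrand) and then invoke the prime number theorem in the form $p_M=M+o(M)$. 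Both arguments are valid; yours requires a heavier analytic input but buys a strictly stronger conclusion: since your estimate holds for all large $M$, it yields $\limsup_{M\to\infty} L^e_{\min}(M)/(2M^2)\leq 1$ and hence, combined with Theorem~\ref{thm:bound1}, the full limit $\lim_{M\to\infty} L^e_{\min}(M)/(2M^2)=1$, not merely the $\liminf$. One small correction: your closing claim that the estimate $p_M=M+o(M)$ is ``genuinely needed'' is not accurate for the theorem as stated. Bertrand's postulate alone does not cap the ratio at $2$ in any unavoidable way, because the $\liminf$ formulation permits a subsequence argument; passing to $M=p-1$ (the paper's trick) gives $\liminf\leq 1$ with no input beyond the infinitude of primes. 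The prime number theorem is genuinely needed only for your uniform-in-$M$ strengthening to the full limit.
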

\begin{proof}
Consider $M\geq 4$ is not a prime number.
By the CRTm construction, we can obtain equi-difference UI sequences of constant-weight $M+1$ and period $p(2M-1)$.
Since there are infinitely many primes $p$ and we can always set $M=p-1$, we have $\liminf_{M\rightarrow\infty} p/M=1$.
Therefore, we have
\[
\liminf_{M\rightarrow\infty} \frac {p(2M-1)}{2M^2} = 1.
\]
This shows that the asymptotic lower bound in Theorem~\ref{thm:bound1} is tight and hence the result follows.
\qed
\end{proof}

\section{Performance Study}\label{sec:simulation}

We consider that only a subset of $M$ users is active in a communication session due to different event-driven traffics. 
To facilitate our study, we assume that each user independently becomes active at the beginning of a communication session with an activation probability $p_a>0$.
It should be noted that, as the basic case when $p_a=1$ \cite{VANET13,SCCW16} is the design goal for deterministic protocol sequence based scheme to achieve UI property, we will investigate some more scenarios when $p_a<1$ by numerical study.

To our best knowledge, for all non-prime $M\geq 8$, CRTm and CRT sequences both have the shortest known sequence period for the UI property, but the Hamming weight of CRTm sequences is larger by one.
To show that CRTm sequences are of more practical interests, we present a performance comparison between CRTm sequences and CRT sequences through numerical study.
We also take the random access scheme into consideration to enrich the comparison.

\subsection{Activation probability $p_a=1$}

We first consider the case that all users are active.
As, for all non-prime $M\geq 8$, CRTm and CRT sequences both have the same sequence period, where the sequence period indicates the maximum delay guarantee, we consider two performance metrics: the \emph{average individual-delay} and \emph{average group-delay}.
Starting from a random time instant, the individual-delay of a user measures how long an active user has to wait until he or she can send one packet successfully, while the group-delay is the time duration we should wait until every active user has sent successfully at least one packet.
For each $M$, we run 500,000 samples to generate uniformly distributed random delay offsets for both CRT and CRTm sequences.

In the random access scheme, each active user sends a packet in each time slot with an independent probability $p_s$.
For a fair comparison with the same energy cost as CRTm sequences, we first pick $p_s=(M+1)/L$, where $L=p_M(2M-1)$ is the corresponding period of CRTm sequences.
Second, we find out the optimal $p_s$ that minimizes the expected delay.
Note that, for a given user $j$, the expected delay is given~\cite{VANET13} in the form 
\begin{equation}\label{eq:randomdelay}
\sum_{i=0}^\infty\big(1-1(1-\beta^i)^{M-1}\big),
\end{equation} 
where $\beta=1-p_s(1-p_s)^{M-1}$ is the probability that user $j$ either keeps silent or transmits a packet with collision.
One can see $p_s=1/M$ would minimize the expected delay in \eqref{eq:randomdelay}.
For each $M$, we generate 500,000 runs for both $p_s=(M+1)/L$ and $p_s=1/M$.

Table~\ref{table:delay} shows that the protocol sequence based scheme enjoys a better delay performance than the random access scheme for all examined cases.
In particular, even if the random access adopts the optimal transmission probability, the CRTm construction still enjoys a $9.8\%-20.5\%$ ($16.4\%-28.6\%$, resp.) improvement in average individual-delay (group-delay).
Furthermore, when $ M\leq 20$, the CRTm sequences have a $3.7\%-9.0\%$ ($4.1\%-10.4\%$, resp.) improvement in average individual-delay (group-delay) over CRT sequences, which can be attributed to the fact that the CRTm construction produces one more transmitting opportunity for each active user in every sequence period.
Such a slight improvement is still desirable in alarming applications, such as health care monitoring in body area networks, natural disaster prevention in wireless sensor networks and collision warning in vehicular networks, and can avoid more potential accidents.


\begin{table}[h]
\scriptsize
\caption{Delay performance of random access schemes, CRT sequences and CRTm sequences.}
\label{table:delay}
\begin{tabular}{llrrrrrr} \hline
 & $M$ & 8 & 9 & 10 & 12 & 14 &  15 \\ \hline \hline
 & Random $p_s=\frac{M+1}{L}$ & 27.1 & 29.0 & 30.9 & 37.5 & 47.1 & 48.9 \\
Average & Random $p_s=\frac{1}{M}$ & 20.4 & 23.1 & 25.8 & 31.3 & 36.7 & 39.4 \\
individual & CRT &20.1 &21.0 &21.9 &26.5 &34.5 &35.1 \\
-delay & CRTm &18.3 &19.5 &20.5 &25.1 &32.6 &33.5 \\ \hline
 & Random $p_s=\frac{M+1}{L}$ & 73.7 & 82.3 & 90.6 & 116.6 & 153.1 & 162.1 \\
Average & Random $p_s=\frac{1}{M}$ & 55.3 & 65.3 & 75.6 & 97.1 & 119.3 & 130.7 \\
group & CRT &51.1 &54.2 &58.2 &74.8 &104.1 &106.6 \\
-delay & CRTm &45.8 &49.9 &54.0 &70.5 &97.2 &101.3 \\
\hline \\ \hline
 & $M$ & 16 &18 &20 &25 &30  &40\\ \hline \hline
 & Random $p_s=\frac{M+1}{L}$ & 50.6 & 57.3 & 66.9 & 85.1 & 96.9 & 129.9 \\
Average & Random $p_s=\frac{1}{M}$ & 42.1 & 47.6 & 53.1 & 66.6 & 80.2 & 107.4 \\
individual & CRT &35.9 &40.5 &48.5 &62.0 &68.5  &91.7 \\
-delay & CRTm &34.4 &39.0 &46.6 &60.1 &66.9  &90.0 \\ \hline
 & Random $p_s=\frac{M+1}{L}$ & 171.3 & 200.5 & 241.1 & 324.8 & 387.0 & 555.7 \\
Average & Random $p_s=\frac{1}{M}$ & 142.4 & 166.2 & 191.0 & 254.0 & 320.7 & 458.8 \\
group & CRT &110.8 &129.4 &161.5 &221.2 &251.8  &361.6 \\
-delay & CRTm &105.7 &124.1 &153.9 &212.4 &245.1  &354.7 \\
\hline
\end{tabular}
\end{table}

\subsection{Activation probability $0<p_a<1$}

We broaden our discussion by examining the cases that $M=10,30$, and $p_a=90\%,80\%,\ldots,40\%$.
For each $M$ and $p_a$, each user has a probability $p_a$ to be active and a probability $1-p_a$ to remain idle in a communication session.
We run 500,000 samples to generate uniformly distributed random delay offsets for CRT and CRTm sequences.

It should be pointed out that for each given $M$, each user follows the same protocol sequences no matter what $p_a$ we consider, since that we need to guarantee the UI property in the worst case, i.e., all $M$ users are active.
Whereas, in the random access scheme, we consider different optimal $p_s$ that minimize
the expected delay for different $M$ and $p_a$.
As there are $Mp_a$ active users in average, by plugging $M=Mp_a$ into \eqref{eq:randomdelay}, one can see that the minimum expected delay for random access scheme occurs when $p_s=1/(Mp_a)$.
We generate 500,000 runs for each examined case.

Table~\ref{table:delay_partial_10} and Table~\ref{table:delay_partial_30} show that, for a fixed $M$, the average delay falls as $p_a$ decreases, which is due to the decrease of blocking probability of each submitted packet.
Since CRTm sequences have a better average delay performance than CRT sequences for all examined cases, in what follows we consider the comparison between CRTm sequences and the optimal random access scheme.

\begin{table}[h]
\scriptsize
\caption{Average delay performance of optimal random access scheme, CRT and CRTm sequences for $M=10$.}
\label{table:delay_partial_10}
\begin{tabular}{llcrrrrrr} \hline
 & $p_a$ & 100\% & 90\% & 80\% & 70\% & 60\% & 50\% & 40\% \\ \hline \hline
Average & Random $p_s=\frac{1}{p_a\cdot M}$ & 25.8 & 23.5 & 21.2 & 18.9 & 16.5 & 14.4 & 11.7 \\ 
individual & CRT & 22.9 & 21.6 & 20.4 & 19.2 & 18.1 & 17.3 & 15.8 \\
-delay & CRTm & 21.5 & 20.2 & 18.9 & 17.7 & 16.4 & 15.4 & 13.3 \\ 
\hline \\ \hline
 & $p_a$ & 100\% & 90\% & 80\% & 70\% & 60\% & 50\% & 40\% \\ \hline \hline
Average & Random $p_s=\frac{1}{p_a\cdot M}$ & 75.6 & 66.0 & 56.4 & 47.2 & 38.2 & 29.9 & 20.3 \\ 
group & CRT & 59.0 & 53.5 & 47.8 & 42.2 & 37.1 & 32.2 & 26.1 \\
-delay & CRTm & 55.1 & 49.1 & 43.4 & 38.2 & 33.4 & 28.3 & 21.9 \\  \hline
\end{tabular}
\end{table}

\begin{table}[h]
\scriptsize
\caption{Average delay performance of optimal random access scheme, CRT and CRTm sequences for $M=30$.}
\label{table:delay_partial_30}
\begin{tabular}{llcrrrrrr} \hline
 & $p_a$ & 100\% & 90\% & 80\% & 70\% & 60\% & 50\% & 40\% \\ \hline \hline
Average & Random $p_s=\frac{1}{p_a\cdot M}$ & 80.2 & 72.4 & 64.8 & 57.1 & 49.2 & 41.5 & 33.8 \\ 
individual & CRT & 69.5 & 65.0 & 60.8 & 56.7 & 52.9 & 49.3 & 45.7 \\
-delay & CRTm & 67.9 & 63.4 & 59.1 & 55.0 & 51.2 & 47.5 & 43.9 \\ 
\hline \\ \hline
 & $p_a$ & 100\% & 90\% & 80\% & 70\% & 60\% & 50\% & 40\% \\ \hline \hline
Average & Random $p_s=\frac{1}{p_a\cdot M}$ & 320.6 & 281.0 & 243.1 & 205.5 & 168.6 & 133.8 & 100.4 \\ 
group & CRT & 252.7 & 227.5 & 204.0 & 181.0 & 158.2 & 136.9 & 115.4 \\
-delay & CRTm & 245.9 & 221.1 & 197.2 & 174.0 & 151.2 & 129.4 & 108.3 \\  \hline
\end{tabular}
\end{table}

As shown in these two tables, the optimal random access scheme has a better average delay performance than the CRTm sequences when $p_a$ is small enough.
The reason is as follows.
For CRTm construction, given $M$, each user follows the same sequences no matter what $p_a$ we choose.
Therefore, as $p_a$ decreases, the active users access the channel more conservatively.
Such a behaviour causes more silent time slots and degrades the individual-delay performance of CRTm sequences.

Denote by the individual/group \emph{critical point} the activation probability $p_a$ when the random access scheme begins to have a better average/group individual-delay than the CRTm construction as $p_a$ decreases from $100\%$.
Simulation results show that for $M=10$ the individual (group, resp.) critical point may be between $50\%-60\%$ ($40\%-50\%$, resp.), and for $M=30$ the individual (group, resp.) critical point may be between $60\%-70\%$ ($40\%-50\%$, resp.).
There is a gap between individual and group critical points since the UI property of CRTm sequences makes the delay performance among all active users has a better fairness than the optimal random access scheme, as shown in Fig~\ref{fig:groupdelay}.
One can see from Fig~\ref{fig:groupdelay} that the maximum and minimum observed group-delays are closer for CRTm sequences.

\begin{figure}[h]
\centering
\includegraphics[width=3.5in]{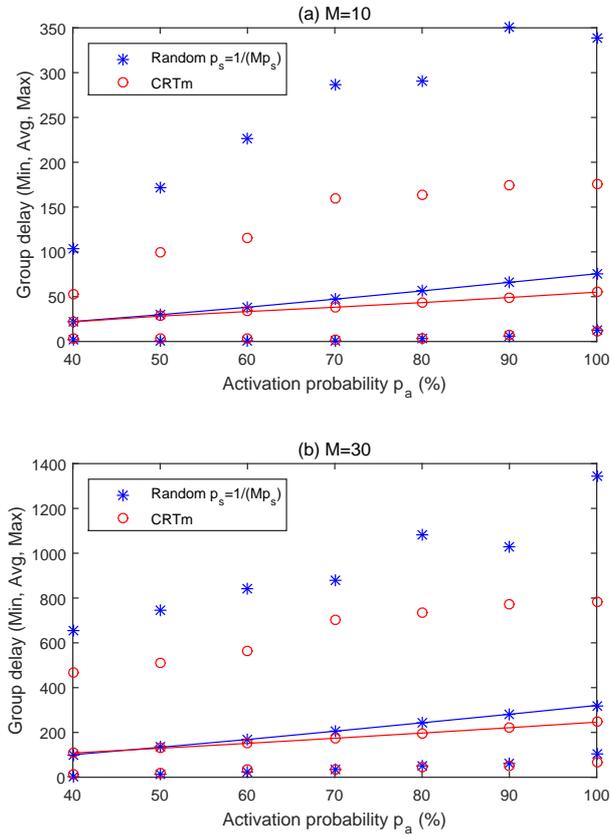}
\caption{The minimum, maximum observed group-delay and the average group-delay for (a) $M=10$ and (b) $M=30$. The average value is connected by a piece-wise linear curve. The symbols above and below this curve indicate the maximum and minimum value, respectively.} \label{fig:groupdelay}
\end{figure}

%
%

\begin{remark}
The CRTm sequences have the same influence on delay performance on the completely asynchronous channel, comparing against the CRT sequences and random access schemes.
\end{remark}


\section{Conclusion}\label{sec:conclusion}
This paper studies UI sequences for bounded-delay data service in wireless multiple-access networks without stringent time synchronization.
To achieve a better delay performance than that of previously known protocol sequences, a new construction of UI sequences with constant-weight $M+1$ is devised for a system shared by $M$ active users.
The new construction and the previously known CRT construction in \cite{SWSC} both produce the shortest known sequence period for all non-prime $M\geq 8$.
Since the two constructions have the same algorithm complexity, there is no difference in hardware cost when employing these two schemes.
Moreover, it is shown by numerical results that the new construction enjoys a better average delay performance than the optimal random access scheme and other constructions with the same sequence period, in a variety of traffic conditions, and thus is of more practical interests.
On the other hand, an asymptotic lower bound on the sequence period is derived for equi-difference UI sequences with constant-weight $M+1$, and is achieved by using the proposed new construction.

Our follow-up work seeks to investigate shorter UI sequences by using the methods presented in this paper.
In addition, our approaches can also be applied to study more general CACs, without requiring that the Hamming weight is equal to the number of active users.

\begin{acknowledgements}
The authors would like to express their gratitude to the anonymous referees for their valuable comments and suggestions.
This work was supported by the National Natural Science Foundation of China under grant number 61301107 and 11601454, the open research fund of National Mobile Communications Research Laboratory, Southeast University , under grant number 2017D09, and the Natural Science Foundation of Fujian Province of China under grant number 2016J05021.
\end{acknowledgements}


\end{document}